\newtheorem{theorem}{Theorem}
\newtheorem{lemma}[theorem]{Lemma}
\newtheorem{Def}{Definition}
\newtheorem{Exam}{Example}
\newtheorem{fact}[theorem]{Fact}
\def\<{\langle}
\def\>{\rangle}
\def\be{\begin{equation*}}
\def\ee{\end{equation*}}
\def\bea{\begin{eqnarray*}}
\def\eea{\end{eqnarray*}}
\newcommand{\comment}[1]{}
\newcommand{\tr}{\operatorname{Tr}}
\newcommand{\cH}{\mathcal{H}}
\newcommand{\cD}{\mathcal{D}}
\newcommand{\ip}[2]{\langle #1 , #2\rangle}
\newcommand{\ket}[1]{\ensuremath{\left|#1\right\rangle}}
\newcommand{\op}[2]{|#1\rangle \langle #2|}
\newcommand{\Tr}{\mathop{\mathrm{tr}}\nolimits}
\newcommand{\setft}[1]{\mathrm{#1}}
\def\H{\mathcal{H}}
\newcommand{\lin}[1]{\setft{L}\left(#1\right)}
\title{\bf\LARGE Quantum Earth mover's distance, No-go Quantum Kantorovich-Rubinstein theorem, and Quantum Marginal Problem}
\author{
Nengkun Yu\footnote{%
    Centre for Quantum Software and Information, School of Software, Faculty of Engineering and Information
Technology, University of Technology Sydney, NSW, Australia}
  \and
  Li Zhou\footnote{
    Department of Computer Science and Technology, Tsinghua University, Beijing, China; and Centre for Quantum Software and Information, School of Software, Faculty of Engineering and Information
Technology, University of Technology Sydney, NSW, Australia}
\and
  Shenggang Ying\footnote{
  Centre for Quantum Software and Information, School of Software, Faculty of Engineering and Information
Technology, University of Technology Sydney, NSW, Australia
  }
 \and
 Mingsheng Ying\footnote{Centre for Quantum Software and Information, School of Software, Faculty of Engineering and Information
Technology, University of Technology Sydney, NSW, Australia; and Institute of Software, Chinese Academy of Sciences, Beijing 100190, China; and Department of Computer Science and Technology, Tsinghua University, Beijing 100084, China
  }
}
\date{\today}  
\begin{document}
\maketitle

\begin{abstract}
The earth mover's distance is a measure of the distance between two probabilistic measures. It plays a fundamental role in mathematics and computer science. The Kantorovich-Rubinstein theorem provides a formula for
the earth mover's distance on the space of regular probability
Borel measures on a compact metric space. In this paper, we investigate the quantum earth mover's distance. We show a no-go Kantorovich-Rubinstein theorem in the quantum setting. More precisely, we show that the trace distance between two quantum states can not be determined by their earth mover's distance. The technique here is to track the bipartite quantum marginal problem. Then we provide inequality to describe the structure of quantum coupling, which can be regarded as quantum generalization of Kantorovich-Rubinstein theorem. After that, we generalize it to obtain into the tripartite version, and build a new class of necessary criteria for the tripartite marginal problem.
\end{abstract}

\section{Introduction}
In mathematics and economics, transportation theory studied the optimal transportation and allocation of resources. Gaspard Monge formalized it in 1781 \cite{Monge1781}. Leonid Kantorovich, the Soviet mathematician and economist, made major advances in the field during World War II \cite{Kantorovich1942}. The central concept in transportation theory is the earth mover's distance, also known as the Wasserstein metric, which measures the distance between two probabilistic distributions over a region. Intuitively, given two distributions, one can be seen as a mass of earth properly spread in space, the other as a collection of holes in that same space. Then, the earth mover's distance measures the least amount of work needed to fill the holes with earth. Here, a unit of work corresponds to transporting a unit of earth by a unit of ground distance. The definition of the earth mover's distance is valid only if the two distribution have the same integral(the capacity of the holes equals the amount of the mass), as in normalized probabilistic density functions. The celebrated Kantorovich-Rubinstein theorem \cite{Kantorovich1958,Kantorovich1982} characterizes the earth mover's distance by considering all joint distributions whose marginal distributions are the two probabilistic distributions, the so called probabilistic  coupling.

Besides being well-studied in probability theory and
the theory of optimal transport, the earth mover's distance is increasingly seeing applications in computer science and beyond.
It is widely used in content-based image retrieval to compute distances between the color histograms of two digital images \cite{Peleg1989}.In this case, the region is the image's domain, and the total amount of light (or ink) is the dirt to be rearranged. The same technique can be used for any other quantitative pixel attribute, such as luminance, gradient, apparent motion in a video frame, etc. An optimal transportation model is introduce in studying domain adaptation in \cite{CFTR17}.
More generally, the Earth mover's distance is used in pattern recognition to compare generic summaries or surrogates of data records called signatures, see a recent survey \cite{PC19}. Very recently, this earth mover's distance is introduced to study the probabilistic programming \cite{Barth2018} as well as Generative Adversarial Networks of machine learning \cite{Arjovsky2017}.

Back to quantum information science, the concept of quantum states, a quantum counter part of probabilistic distribution, plays significant role in quantum information science as it is used for carrying quantum information in information processing tasks.
We consider finite-dimensional complex Hilbert spaces as quantum state spaces. A pure state is represented by a unit vector in such a space. General quantum states, the so-called density matrices or mixed states, are described by positive
semidefinite matrices with unit trace. Since the multipartite Hilbert space is the tensor product of the individual particles' Hilbert spaces, the dimension of the multipartite Hilbert space scales exponentially in the number of particles. Although it is highly desired to understand the behavior of quantum systems via classical modeling, this exponential behavior becomes one of the main obstructions.

Interestingly, it has been widely known that many important physical quantities, for instance energy and entropy, depend on very small parts of the whole system only, $i.e.$, the so-called marginal or reduced density matrices. On the other hand, in reality, quantum states of many physically realistic quantum systems usually involve only few-body interactions \cite{hastings2010locality,wolf2008area,perez2006matrix,verstraete2008matrix,cirac2009renormalization,Schilling14,Schilling17}. These observations can actually save
lots of parameters. As an example, one can observe that for the Hamiltonians arose from quantum chemistry which contain at most
2-body interactions, the number of free
parameters scales at most quadratically
in the number of particles. Coulson \cite{Coulson1960,Tredgold1957} proposed the following
problem: How to characterize the allowed sets of 2-body correlations or
density operators between all pairs of $N$ particles?
The general problem of characterizing the set of possible reduced density matrices (maybe $k$-local),
known as the quantum marginal problem, has been considered as
one of the most fundamental problems in quantum information theory and in quantum chemistry \cite{Coleman1963,Ruskai1969,CY2000,Stillinger1995}. A very
large effort has been devoted to understanding this problem \cite{Cioslowski2000,Mazziotti2006}, and it is proved to be NP-hard and QMA (quantum Merlin-Arthur) complete in \cite{Liu2006,Liu2007}.

Due to the significance of the quantum marginal problem, there are many attempts in understanding it, even in the case of low dimension or a small number of parties. Bravyi characterizes the relation between the spectra of local one-qubit states and the spectra of the whole two-qubit mixed states by a remarkable
explicit argument \cite{Bravyi2004}. In \cite{Chen2014}, the two-qubit symmetric extension problem, a special marginal problem, is completely solved. This result provides the first analytic necessary and sufficient condition
for the quantum marginal problem with overlapping marginal. For the general marginal problem, some necessary and some sufficient conditions are provided in \cite{Osborne2008,Carlen2013} by using the celebrated Strong Subadditivity of entropy.

One induced problem also has attracted a lot of attention: Characterizing of the one-body reduced
density matrices of a pure global quantum state \cite{Klyachko2004,Daftuar2004,Klyachko2006,Christandl2006}. For multiqubit case, this problem is completely solved by
Higuchi, Sudbery and Szulc by completely determining the possible
one-qubit reduced states in\cite{Higuchi2003}. In \cite{Yu2013}, it is proved that multipartite W-type state, a special class of multi-qubit states, is determined by its single-particle reduced density matrices among all W-type states. For general multipartite pure state, \cite{Christandl2014} proposed an efficient method to compute the joint probability distribution of the eigenvalues of its one-body reduced density
matrices.

In this paper, we study the possible generalization of the earth mover's distance. In particular, we show a no-go quantum Kantorovich-Rubinstein theorem for almost all quantum distance. Our main technique is to study the quantum coupling, or equivalently, the bipartite quantum marginal problem. Then, the idea is used to study the tripartite quantum marginal problem. The structure of this paper is as follows.

In Section 2, we first provide some notations and preliminaries of the distance and fidelity of quantum states. After that, the basic definition and examples of the quantum marginal problem are given.

In Section 3, we propose a quantum version of the earth mover's distance. Then we show a no-go Kantorovich-Rubinstein theorem in the quantum setting. More precisely, we show that the trace distance between two quantum states can not be determined by their earth mover's distance. Our main technique is to
study the bipartite quantum marginal problem. We observe that the fidelity between the two marginal states is at least the distance between the two probabilities obtained by measuring the bipartite state through the projective measurements onto the symmetric subspace and anti-symmetric subspace.

In Section 4, we study the largest overlap of quantum coupling and projection onto the symmetric subspace for given quantum marginals. We obtain two lower bounds, one for diagonal marginals, the other for general marginals. Our result can be regarded as quantum generalization of Kantorovich-Rubinstein theorem, although in the inequality fashion rather than equality.

In Section 5, we study the tripartite marginal problems. For tripartite state $\rho_{ABC}$ with the dimensions of $A$ and $B$ being equal, we define the two probability distributions which depend on $\rho_{AB}$. Then, we show that the distance between the two marginal states $\rho_{AC}$ and $\rho_{BC}$ is at most the fidelity between the two probability distributions. On the other hand, the fidelity between $\rho_{AC}$ and $\rho_{BC}$ is at least the distance between the two probability distributions.
By changing the local operation on subsystem $A$, we are able to provide a class of necessary criteria for the tripartite marginal problem.

In Section 6, we mention some open questions regarding this marginal problem.
\section{Background}
\subsection{Notations and Preliminaries}
We use the symbol $\cH$ to denote the finite dimensional Hilbert space over complex numbers, $d_{\cH}$ to denote its dimension and
$\lin{\cH}$ to denote the set of linear operators mapping from $\cH$ into itself.
Let $\mathrm{Pos}(\cH)\subset\lin{\cH}$ be the set of positive (semidefinite) matrices, and $\cD(\cH) \subset\mathrm{Pos}(\cH)$ is the set of positive (semidefinite) matrices with trace one.
A pure quantum state of $\cH$ is a normalized vector $\ket{\psi}\in\cH$, while a general quantum state is characterized by density operator $\rho\in\cD(\cH)$.
For simplicity, we use $\psi$ to represent the density operator of a pure state $\ket{\psi}$ which is just the projector $\psi = \op{\psi}{\psi}$.
A density operator $\rho$ can always be decomposed into a convex combination of pure states:
\be
\rho = \sum_{k=1}^{n}p_k\op{\psi_k}{\psi_k},
\ee
where the coefficients $p_k$ are positive numbers and add up to one.

To characterize the difference between the quantum states, there are two commonly used measures: trace distance and fidelity. The trace distance $D$ between two density operators $\rho$ and $\sigma$ is defined as
\be
D(\rho,\sigma) \equiv \frac{1}{2}||\rho-\sigma||_1
\ee
where we define $||A||_1\equiv\Tr\sqrt{A^\dag A}$ to be the trace of the positive square root of $A^\dag A$.

Notice that this is a direct generalization of the distance between classical distributions, usually called total variance distance.

We use $||A||_2\equiv\sqrt{\Tr A^\dag A}$ to be denote the $2$-norm of $A$.

The matrix $1$-norm satisfies the following triangle inequalities and H{\"o}lder's inequality (Cauchy inequality).
\begin{fact}
\begin{align*}
||A+B||_1&\leq ||A||_1+||B||_1.\\
||A-B||_1&\geq |||A||_1-||B||_1|.\\
||AB||_1&\leq ||A||_2||B||_2.
\end{align*}
\end{fact}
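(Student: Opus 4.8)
The plan is to deduce all three inequalities from the polar decomposition together with the dual (variational) description of the trace norm. First I would record the elementary fact that, for every $A\in\lin{\cH}$ with polar decomposition $A=V|A|$ ($V$ unitary, $|A|=\sqrt{A^\dag A}\ge 0$),
\[
||A||_1=\Tr|A|=\Tr(V^\dag A)=\max\bigl\{\,|\Tr(UA)| \;:\; U\in\lin{\cH}\ \text{unitary}\,\bigr\}.
\]
The inequality $|\Tr(UA)|\le ||A||_1$ for unitary $U$ follows by writing $\Tr(UA)=\Tr\bigl(|A|^{1/2}(UV)|A|^{1/2}\bigr)$ and bounding each diagonal entry by Cauchy--Schwarz (each $\langle e_i|\,|A|^{1/2}UV|A|^{1/2}\,|e_i\rangle$ has modulus at most $\langle e_i|\,|A|\,|e_i\rangle$), while $U=V^\dag$ shows the maximum is attained.

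Given this characterization, the first inequality is immediate: choosing a unitary $U$ with $||A+B||_1=\Tr\bigl(U(A+B)\bigr)$ we get $||A+B||_1=\Tr(UA)+\Tr(UB)\le|\Tr(UA)|+|\Tr(UB)|\le ||A||_1+||B||_1$. The second inequality is a purely formal consequence of the first: applying subadditivity to the identity $A=(A-B)+B$ yields $||A||_1-||B||_1\le ||A-B||_1$, and exchanging the roles of $A$ and $B$ (using $||B-A||_1=||A-B||_1$) yields $||B||_1-||A||_1\le ||A-B||_1$; together these are exactly $\bigl|\,||A||_1-||B||_1\,\bigr|\le ||A-B||_1$.

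For the third inequality I would once more pass to a unitary $U$ with $||AB||_1=|\Tr(UAB)|$ and then recognize the trace as a Hilbert--Schmidt inner product: with $\langle X,Y\rangle\equiv\Tr(X^\dag Y)$ one has $\Tr(UAB)=\langle A^\dag U^\dag,B\rangle$, so Cauchy--Schwarz for $\langle\cdot,\cdot\rangle$ gives $|\Tr(UAB)|\le ||A^\dag U^\dag||_2\,||B||_2$. Finally $||A^\dag U^\dag||_2^2=\Tr\bigl(UAA^\dag U^\dag\bigr)=\Tr(AA^\dag)=||A||_2^2$ by cyclicity of the trace, so $||AB||_1\le ||A||_2\,||B||_2$, as claimed.

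The one genuinely nontrivial ingredient --- and hence the step I expect to need the most care --- is the dual characterization of $||\cdot||_1$ (and, for the last inequality, Cauchy--Schwarz on the Hilbert--Schmidt inner product together with unitary invariance of $||\cdot||_2$); once these are in place, each of the three inequalities is a one- or two-line manipulation. An alternative route that avoids the variational formula altogether is to work directly with singular value decompositions and von Neumann's trace inequality (equivalently, monotonicity of the Ky Fan $k$-norms), but the polar-decomposition argument seems cleaner and more self-contained.
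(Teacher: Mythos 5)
Your proof is correct. Note that the paper itself offers no argument for this Fact at all --- it is stated as a standard property of the matrix $1$- and $2$-norms and used without proof --- so there is no authorial proof to compare against. Your route (the variational characterization $||A||_1=\max_U|\Tr(UA)|$ obtained from the polar decomposition, the reverse triangle inequality as a formal consequence of subadditivity, and Cauchy--Schwarz for the Hilbert--Schmidt inner product together with unitary invariance of $||\cdot||_2$ for the third inequality) is the standard one, and each step as you present it checks out, including the diagonal-entry Cauchy--Schwarz bound $|\langle e_i|\,|A|^{1/2}UV|A|^{1/2}|e_i\rangle|\leq\langle e_i|\,|A|\,|e_i\rangle$ and the identity $||A^\dag U^\dag||_2=||A||_2$. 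So your write-up simply supplies a self-contained justification that the paper chose to omit.
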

It is direct to verify the following strong concavity statement about the distance between the mixture of quantum states by the triangle inequality given above.
\begin{fact}\label{strong concavity}
For quantum states $\rho_i$, $\sigma_i$ and probability distribution $(p_0,p_1,\cdots,p_n)$
\be
D\left( \sum_ip_i\rho_i,\sum_ip_i\sigma_i \right) \leq \sum_{i=0}^n p_i D(\rho_i,\sigma_i).
\ee
\end{fact}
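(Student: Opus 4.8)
The plan is to reduce the mixed‑state inequality to the triangle inequality for the $1$‑norm stated in the preceding Fact, treating the two‑term case first and then iterating. Concretely, I would start from the definition $D\!\left(\sum_i p_i\rho_i,\sum_i p_i\sigma_i\right)=\frac12\bigl\lVert\sum_i p_i(\rho_i-\sigma_i)\bigr\rVert_1$, and then invoke subadditivity of $\lVert\cdot\rVert_1$ to pull the sum outside the norm. Since each $p_i\geq 0$, the absolute homogeneity of the norm gives $\lVert p_i(\rho_i-\sigma_i)\rVert_1=p_i\lVert\rho_i-\sigma_i\rVert_1$, and assembling the pieces yields
\be
D\!\left(\sum_i p_i\rho_i,\sum_i p_i\sigma_i\right)=\tfrac12\Bigl\lVert\sum_i p_i(\rho_i-\sigma_i)\Bigr\rVert_1\le\tfrac12\sum_i p_i\lVert\rho_i-\sigma_i\rVert_1=\sum_i p_i D(\rho_i,\sigma_i).
\ee

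The only subtlety worth spelling out is that the first inequality in the cited Fact, $\lVert A+B\rVert_1\le\lVert A\rVert_1+\lVert B\rVert_1$, is stated for two operators; I would either note that it extends to finitely many summands by an immediate induction on $n$, or equivalently apply it repeatedly. The nonnegativity of the $p_i$ is what makes the homogeneity step legitimate — if some coefficient were negative one would only get $|p_i|$, which is why the hypothesis that $(p_0,\dots,p_n)$ is a probability distribution matters here (and is anyway needed for $\sum_i p_i\rho_i$ to be a state).

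There is essentially no hard part: the statement is, as the text says, a direct consequence of the triangle inequality, and the proof is three lines once one writes out the definition of $D$. If anything, the ``main obstacle'' is purely cosmetic — making sure the telescoping/induction over the index set is mentioned rather than silently assumed. I would therefore present the computation above as the whole proof, prefaced by a one‑sentence remark that convexity of the norm (or induction on the two‑term triangle inequality) is all that is used, and that positivity of the $p_i$ is exactly what licenses replacing $\lVert p_i(\rho_i-\sigma_i)\rVert_1$ by $p_i\lVert\rho_i-\sigma_i\rVert_1$.
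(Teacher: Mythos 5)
Your proof is correct and follows exactly the route the paper indicates: the paper states this Fact without a detailed argument, remarking only that it is ``direct to verify \ldots by the triangle inequality given above,'' which is precisely your computation via subadditivity and homogeneity of $\lVert\cdot\rVert_1$. Nothing further is needed.
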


The fidelity between states $\rho$ and $\sigma$ is defined to be
\be
F(\rho,\sigma) \equiv \mathrm{Tr}\sqrt{\sqrt{\rho}\sigma\sqrt{\rho}}.
\ee
For pure states $\ket{\psi}$ and $\ket{\phi}$, $F(\psi,\phi)=|\ip{\psi}{\phi}|$.

The strong concavity property for the fidelity (\cite{NielsenC00}) can be formalized as
\begin{fact}\label{strong concavity}
For quantum states $\rho_i$, $\sigma_i$ and probability distributions $(p_0,p_1,\cdots,p_n)$ and $(q_0,q_1,\cdots,q_n)$
\be
F\left( \sum_ip_i\rho_i,\sum_iq_i\sigma_i \right) \ge \sum_{i=0}^n\sqrt{p_iq_i}F(\rho_i,\sigma_i).
\ee
\end{fact}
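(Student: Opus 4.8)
\emph{Proof proposal.} The plan is to derive the inequality from Uhlmann's theorem, which expresses the fidelity as $F(\rho,\sigma)=\max|\bracket{\psi}{\phi}|$, the maximum being over all purifications $\ket{\psi}$ of $\rho$ and $\ket{\phi}$ of $\sigma$ in a common enlarged Hilbert space. First I would fix, for each index $i$, purifications $\ket{\psi_i}\in\cH\ox\cH'$ of $\rho_i$ and $\ket{\phi_i}\in\cH\ox\cH'$ of $\sigma_i$ (enlarging the auxiliary space $\cH'$ if necessary so that all of them live in the same space), chosen so that Uhlmann's bound is saturated, i.e.\ $\bracket{\psi_i}{\phi_i}=F(\rho_i,\sigma_i)$; the global phase of each $\ket{\phi_i}$ can be adjusted so that this inner product is real and nonnegative.

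Next I would introduce an ancilla with orthonormal basis $\{\ket{i}\}_{i=0}^{n}$ and form
\be
\ket{\Psi}=\sum_i\sqrt{p_i}\,\ket{\psi_i}\ket{i},\qquad \ket{\Phi}=\sum_i\sqrt{q_i}\,\ket{\phi_i}\ket{i}
\ee
in $\cH\ox\cH'\ox\C^{\,n+1}$. A short computation using orthogonality of the $\ket{i}$ shows that tracing out $\cH'$ together with the ancilla from $\op{\Psi}{\Psi}$ yields $\sum_i p_i\rho_i$, so $\ket{\Psi}$ is a purification of $\sum_i p_i\rho_i$; likewise $\ket{\Phi}$ purifies $\sum_i q_i\sigma_i$.

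Finally, applying Uhlmann's theorem to $\sum_i p_i\rho_i$ and $\sum_i q_i\sigma_i$ and using the particular purifications $\ket{\Psi},\ket{\Phi}$ gives
\be
F\Bigl(\sum_i p_i\rho_i,\sum_i q_i\sigma_i\Bigr)\ \ge\ |\bracket{\Psi}{\Phi}|\ =\ \Bigl|\sum_i\sqrt{p_iq_i}\,\bracket{\psi_i}{\phi_i}\Bigr|\ =\ \sum_i\sqrt{p_iq_i}\,F(\rho_i,\sigma_i),
\ee
where the last step uses $\bracket{\psi_i}{\phi_i}=F(\rho_i,\sigma_i)\ge 0$. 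The only substantial ingredient is Uhlmann's theorem itself, which I would simply cite (it is standard, e.g.\ \cite{NielsenC00}); everything else is bookkeeping. The main point that requires care is the harmless but necessary step of embedding all the individual purifications into one common Hilbert space and matching the ancilla dimension to the number of terms before defining $\ket{\Psi}$ and $\ket{\Phi}$.
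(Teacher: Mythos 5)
Your proof is correct: the purifications $\ket{\Psi}=\sum_i\sqrt{p_i}\ket{\psi_i}\ket{i}$ and $\ket{\Phi}=\sum_i\sqrt{q_i}\ket{\phi_i}\ket{i}$ do purify the mixtures, and Uhlmann's theorem (Fact \ref{fac:Uhlmann}) then yields $F\bigl(\sum_i p_i\rho_i,\sum_i q_i\sigma_i\bigr)\ge|\bracket{\Psi}{\Phi}|=\sum_i\sqrt{p_iq_i}\,F(\rho_i,\sigma_i)$ exactly as you say. The paper offers no proof of this fact, citing \cite{NielsenC00} instead, and your argument is precisely the standard Uhlmann-based proof given in that reference, so there is nothing to fix.
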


\begin{Def}\label{def:purification}
	We say that a pure state $\ket{\psi} \in \H_A\otimes\H_B$
	is a purification of some state $\rho$
	if $\Tr_A({\psi})=\rho$.
\end{Def}
In the above definition, $\Tr_A(\psi)$ means that we trace out the subsystem $A$ from $\psi$. Its formal definition is $\Tr_A(\psi) = \sum_i \<i|_A \psi |i\>_A$.

\begin{fact}[Uhlmann's theorem, \cite{Uhl77}]
	\label{fac:Uhlmann}
	Given quantum states $\rho$, $\sigma$, and a purification $\ket{\psi}$ of $\rho$,
	it holds that $F({\rho},{\sigma})=\max\limits_{\ket{\phi}} | \langle \phi | \psi \rangle | $,
	where the maximum is taken over all purifications of $\sigma$.
\end{fact}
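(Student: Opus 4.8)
The plan is to turn the maximization over purifications of $\sigma$ into the variational formula $\|X\|_1=\max_{U}|\Tr(UX)|$ for the trace norm and then read off the fidelity. First I would reduce to the case $\H_A=\H_B=:\H$ of a common dimension $d$: since $\ket{\psi}$ purifies $\rho$ and purifications of $\sigma$ are admitted we have $d_{\H_A}\ge\max(\operatorname{rank}\rho,\operatorname{rank}\sigma)$, and enlarging $\H_A$ only enlarges the set of purifications of $\sigma$ while, by the inequality obtained below, never raising $|\langle\phi|\psi\rangle|$ above $F(\rho,\sigma)$, so it suffices to treat the tight case (a maximizer found inside a subspace stays a purification of the bigger space). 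Fixing an orthonormal basis $\{\ket{k}\}_{k=1}^{d}$ of $\H$, set $\ket{\Omega}=\sum_{k=1}^{d}\ket{k}_A\ket{k}_B$. Because $\I_A\otimes\sqrt{\rho}$ acts only on $\H_B$ and $\Tr_A\op{\Omega}{\Omega}=\I_B$, we get $\Tr_A\bigl[(\I_A\otimes\sqrt{\rho})\op{\Omega}{\Omega}(\I_A\otimes\sqrt{\rho})\bigr]=\sqrt{\rho}\,\I_B\,\sqrt{\rho}=\rho$, so $(\I_A\otimes\sqrt{\rho})\ket{\Omega}$ purifies $\rho$ and $(\I_A\otimes\sqrt{\sigma})\ket{\Omega}$ purifies $\sigma$. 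Using that any two purifications of a state on a common purifying space differ by a unitary on that space, the given $\ket{\psi}$ equals $(U_0\otimes\I_B)(\I_A\otimes\sqrt{\rho})\ket{\Omega}$ for a fixed unitary $U_0$ on $\H_A$, while every purification of $\sigma$ equals $(U\otimes\I_B)(\I_A\otimes\sqrt{\sigma})\ket{\Omega}$ for a unitary $U$ on $\H_A$.

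Substituting these and absorbing the $\H_A$-factors into $V:=U^{\dagger}U_0$, which ranges over \emph{all} unitaries on $\H_A$, I would obtain $\langle\phi|\psi\rangle=\bra{\Omega}\bigl(V\otimes\sqrt{\sigma}\sqrt{\rho}\bigr)\ket{\Omega}$, and a one-line expansion in the chosen basis gives $\bra{\Omega}(V\otimes W)\ket{\Omega}=\Tr(V^{T}W)$ for all operators $V,W$ on $\H$. Hence
\be
\max_{\ket{\phi}}\,|\langle\phi|\psi\rangle|=\max_{V\ \mathrm{unitary}}\bigl|\Tr\bigl(V^{T}\sqrt{\sigma}\sqrt{\rho}\bigr)\bigr|=\bigl\|\sqrt{\sigma}\sqrt{\rho}\bigr\|_1 ,
\ee
since $V^{T}$ again sweeps out all unitaries, and by the variational characterization of the $1$-norm (the maximum being attained at the unitary in the polar decomposition of $\sqrt{\sigma}\sqrt{\rho}$, while the matching inequality $|\Tr(V^{T}X)|\le\|X\|_1$, applied even before the maximization, already shows $|\langle\phi|\psi\rangle|\le F(\rho,\sigma)$ for \emph{every} purification $\ket{\phi}$). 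Finally, straight from the definition of the trace norm,
\be
\bigl\|\sqrt{\sigma}\sqrt{\rho}\bigr\|_1=\Tr\sqrt{\bigl(\sqrt{\sigma}\sqrt{\rho}\bigr)^{\dagger}\sqrt{\sigma}\sqrt{\rho}}=\Tr\sqrt{\sqrt{\rho}\,\sigma\,\sqrt{\rho}}=F(\rho,\sigma),
\ee
which is the assertion.

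The computation is short, so the real work is bookkeeping. The reduction to $\H_A=\H_B$ needs one extra line when $d_{\H_A}>d_{\H_B}$: there $V$ contributes only through its $d_{\H_B}\times d_{\H_B}$ corner, which ranges over all contractions, and one invokes $\|X\|_1=\max\{|\Tr(CX)|:\|C\|\le1\}$ instead. The two structural ingredients — that purifications on a common purifying space differ by a unitary there (Schmidt or polar decomposition), and the trace-norm variational formula (polar decomposition) — I would either cite or prove in a line each. One should also keep track of the transpose in $\bra{\Omega}(V\otimes W)\ket{\Omega}=\Tr(V^{T}W)$ so that the final object comes out as $F(\rho,\sigma)$ rather than a conjugate; since $\Tr(V^{T}W)=\Tr(VW^{T})$ this is a place to be careful rather than a genuine difficulty.
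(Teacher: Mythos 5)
Your proposal is correct, but note that the paper does not prove this statement at all: it is imported as a Fact with a citation to Uhlmann's original article, so there is no internal proof to compare against. What you give is the standard textbook argument — parameterize purifications via the unnormalized maximally entangled vector $\ket{\Omega}$, reduce the overlap to $\Tr\bigl(V^{T}\sqrt{\sigma}\sqrt{\rho}\bigr)$, and invoke the variational characterization $\norm{X}_1=\max_U|\Tr(UX)|$ attained at the polar unitary, which yields $\norm{\sqrt{\sigma}\sqrt{\rho}}_1=\Tr\sqrt{\sqrt{\rho}\,\sigma\sqrt{\rho}}=F(\rho,\sigma)$. The two structural points you defer (any two purifications on a common purifying system differ by a unitary there, and the trace-norm variational formula) are indeed one-line consequences of the Schmidt and polar decompositions, and your handling of the dimension bookkeeping — requiring $d_{\H_A}\ge\operatorname{rank}\sigma$ for the statement to be non-vacuous, and replacing unitaries by contractions with $\norm{X}_1=\max\{|\Tr(CX)|:\norm{C}\le 1\}$ when $d_{\H_A}>d_{\H_B}$ — together with the transpose caveat in $\bra{\Omega}(V\otimes W)\ket{\Omega}=\Tr(V^{T}W)$, closes the remaining gaps. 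Note also that this is exactly the computation the paper later re-uses inside its Lemma 7 (where $F(\psi_A,\psi_B)=\max_U|\tr(M^{*}UM)|=\norm{MM^{*}}_1$), so your proof is consistent with how the Fact is applied in the text.
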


\begin{fact}
	\label{fact:fdrelation}
	Given quantum states $\rho$, $\sigma$,
$$
F^2({\rho},{\sigma})+D^2({\rho},{\sigma})\leq 1.
$$
\end{fact}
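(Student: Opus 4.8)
The plan is to reduce the statement to the case of pure states via Uhlmann's theorem and then to invoke the contractivity of the trace distance under the partial trace. Concretely, I would first fix a purification $\ket{\psi}\in\H_A\ox\H_B$ of $\rho$ and, applying Fact~\ref{fac:Uhlmann}, choose a purification $\ket{\phi}\in\H_A\ox\H_B$ of $\sigma$ that attains $|\ip{\phi}{\psi}|=F(\rho,\sigma)$.

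Next I would compute the trace distance between the two pure states $\psi=\op{\psi}{\psi}$ and $\phi=\op{\phi}{\phi}$ directly. The operator $\psi-\phi$ is Hermitian, traceless, and supported on the subspace $\mathrm{span}\{\ket{\psi},\ket{\phi}\}$, whose dimension is at most two. Expressing $\ket{\phi}$ in an orthonormal basis of this subspace whose first vector is $\ket{\psi}$ turns $\psi-\phi$ into a $2\times 2$ traceless Hermitian matrix with determinant $-(1-|\ip{\phi}{\psi}|^2)$, hence with eigenvalues $\pm\sqrt{1-|\ip{\phi}{\psi}|^2}$. Therefore $D(\psi,\phi)=\tfrac12\norm{\psi-\phi}_1=\sqrt{1-|\ip{\phi}{\psi}|^2}=\sqrt{1-F^2(\rho,\sigma)}$.

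Finally, since $\Tr_A(\psi)=\rho$ and $\Tr_A(\phi)=\sigma$, monotonicity of the trace distance under the partial trace gives $D(\rho,\sigma)\le D(\psi,\phi)=\sqrt{1-F^2(\rho,\sigma)}$, and squaring and rearranging yields $F^2(\rho,\sigma)+D^2(\rho,\sigma)\le 1$. The main obstacle is that this last contractivity — as well as the pure-state formula for $D$ used above — is not literally among the facts recorded so far, so I would supply a short justification. From the Jordan decomposition $\rho-\sigma=\Delta_+-\Delta_-$ into orthogonal positive parts, together with $\Tr(\rho-\sigma)=0$ (so $\Tr\Delta_+=\Tr\Delta_-$), one obtains the variational identity $D(\rho,\sigma)=\max_{0\preceq P\preceq\I}\Tr\!\big(P(\rho-\sigma)\big)$; if $P$ attains this maximum, then $\I_A\ox P$ is an effect with $\Tr\!\big((\I_A\ox P)(\psi-\phi)\big)=\Tr\!\big(P(\rho-\sigma)\big)=D(\rho,\sigma)$, while the left-hand side is at most $D(\psi,\phi)$ by the same variational bound applied to $\psi$ and $\phi$. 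Alternatively, both ingredients can simply be cited from \cite{NielsenC00}.
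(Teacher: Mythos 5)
Your proof is correct. Note, however, that the paper does not prove this statement at all: it is recorded as a background \emph{Fact} (it is one of the standard Fuchs--van de Graaf inequalities, $D(\rho,\sigma)\le\sqrt{1-F^2(\rho,\sigma)}$, available in \cite{NielsenC00}), so there is no in-paper argument to compare against. Your route is the standard textbook derivation and is sound in every step: Uhlmann's theorem (Fact~\ref{fac:Uhlmann}) gives purifications with $|\ip{\phi}{\psi}|=F(\rho,\sigma)$; the $2\times 2$ computation of the eigenvalues $\pm\sqrt{1-|\ip{\phi}{\psi}|^2}$ of $\psi-\phi$ correctly yields $D(\psi,\phi)=\sqrt{1-F^2(\rho,\sigma)}$ (the degenerate case $\ket{\phi}\propto\ket{\psi}$ is trivial); and your justification of contractivity under the partial trace via the variational identity $D(\rho,\sigma)=\max_{0\preceq P\preceq I}\Tr\bigl(P(\rho-\sigma)\bigr)$, applied with the effect $I_A\otimes P$ on the purifying space, is exactly the right way to supply the ingredient the paper's list of facts does not state explicitly. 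So your proposal fills a gap the paper deliberately leaves to the literature, rather than diverging from an existing proof.
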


\subsection{Marginal Problem}
In this subsection, we provide the definition and some examples of quantum marginal problem. We consider the following multipartite Hilbert space
\begin{equation*}
\mathcal{H}_{S}=\mathcal{H}_A\otimes\mathcal{H}_B\otimes\mathcal{H}_C\otimes\cdots
\end{equation*}
where $\mathcal{H}_A,\mathcal{H}_B,\mathcal{H}_C\cdots$ are all finite dimensional Hilbert space.
We use $S=\{A,B,C,\cdots\}$ to denote the set of the whole index of the subsystems.
For any $I\subset S$, for instance ~$A$ or $BC$ (containing systems $B$ and $C$), we use the marginal of $I$ to denote the joint state of subsystems $I$,
\begin{equation*}
\rho_{I} = \Tr_{{S}\setminus {I}}[\rho_{{S}}],
\end{equation*}
where $\rho_S$ denotes the state of the whole system $\mathcal{H}_{S}$, and we trace out the complementary system ${S}\setminus {I}$ of ${I}$.

Clearly, a set of $\rho_{I}$s need to fulfill certain conditions to make sure the existence of the golbal state $\rho_S$. This motivates the following defintion of quantum marginal problem
\begin{Def}\label{def:qmp}
A given family $\mathcal{K}\subset 2^S$ and quantum states $(\rho_{{I}})_{{I} \in \mathcal{K}}$ is called compatible if
there exists a density operator $\rho_{{S}}$
for the total system such that $\forall {I} \in \mathcal{K}$
\begin{equation*}
\rho_{{I}}= \mbox{Tr}_{{S} \setminus {I}}[\rho_{{S}}]\,.
\end{equation*}
The quantum marginal problem is to determine whether given family $\mathcal{K}\subset 2^S$ and quantum states $(\rho_{{I}})_{{I} \in \mathcal{K}}$ is compatible.
\end{Def}

To illustrate this definition, we study some examples.

Let $\cH_{J}=\mathcal{H}_{\mathcal{A}}\otimes \mathcal{H}_{\mathcal{B}}\otimes\cH_C$ and $\mathcal{K}=\{\{A,C\},\{A,B\}\}$. Given $\rho_{AC},\rho_{AB}$, we are asking the question that when this marginal problem is compatible. The answer for the general version of this problem is not clear yet, even only for the three qubits case.
\begin{Exam}
The answer for the classical version of this problem is always yes if $\rho_{A}:=\tr_C \rho_{AC}=\tr_B \rho_{AB}$. Notice that in this case, $\rho_{AC}$ and $\rho_{BC}$ are classical distributions, in other words, diagonal states. One can verify that $\rho_{ABC}(x,y,z)$ is compatible with $\rho_{AC}$ and $\rho_{BC}$, where the tripartite distribution $\rho_{ABC}(x,y,z)$  is defined by
\begin{equation*}
\rho_{ABC}(x,y,z)=\frac{\rho_{AC}(x,z)\rho_{AB}(x,y)}{\rho_{A}(x)}.
\end{equation*}

\end{Exam}
\begin{Exam}
For $\cH_{J}$ being three-qubit Hilbert space and $\rho_{AC}=\rho_{AB}$, the problem is indeed the two-qubit symmetric extension problem. Symmetric extendibility of bipartite states is of vital importance in
quantum information because of its central role in separability tests, one-way distillation of EPR pairs, one-way
distillation of secure keys, quantum marginal problems, and anti-degradable quantum channels.
This problem is solved in \cite{Chen2014} by proving $\tr(\rho_B^2)\geq \tr(\rho_{AB}^2)-4\sqrt{\det{\rho_{AB}}}$ is sufficient and necessary for symmetric extendibility.
\end{Exam}
\begin{Exam}
In \cite{Carlen2013}, necessary conditions according to Strong Subadditivity of Entropy \cite{Lieb1973,Lieb1975} are given,
\begin{eqnarray*}
S(AC)+S(AB)\geq S(A)+S(ABC)\geq S(A),\\
S(AC)+S(AB)\geq S(B)+S(C),
\end{eqnarray*}
where $S(\rho)=-\tr(\rho\log\rho)$ defines the entropy of the states.
\end{Exam}
\section{Quantum earth mover's distance}
In this section, we propose quantum earth mover's distance and show a no-go quantum Kantorovich-Rubinstein theorem.
\subsection{Classical earth mover's distance}
In the discrete version, a probabilistic coupling models two distributions with a single joint
distribution.
\begin{Def}
  Given $\mu, \nu$ distributions over finite or countably infinite $X$ and $X$, a
  distribution $\pi$ over pairs $X \times X$ is called a \emph{coupling}
  for $(\mu, \nu)$ if its marginal distributions $\pi_1 = \mu$ and $\pi_2= \nu$.
\end{Def}
In the above definition, the marginal distribution $\pi_1(x) = \sum_{y} \pi(x,y)$ and $\pi_2(y) = \sum_{x} \mu(x,y)$. We use $\mathfrak{P}(\mu,\nu)$ denotes the set of coupling of $\mu$ and $\nu$.
\begin{Def}
Let $X$ be a finite or countably infinite set, $h:X\times X\mapsto \mathbb{R}^{+}$ be a nonnegative distance function.
The earth mover's distance between distributions $\mu, \nu$ over $X$ is defined as
\begin{align*}
\inf_{\pi\in\mathfrak{P}(\mu,\nu)}\sum_{x,y\in X}h(x,y)\pi(x,y).
\end{align*}
\end{Def}

\subsection{Quantum earth mover's distance}
One new concept introduced in classical earth mover's distance is the probabilistic coupling.
The quantum counter part of probability distribution is the quantum state. We first provide a definition of quantum coupling of quantum states.
\begin{Def}
Given two density matrices $\rho_A$ and $\rho_B$ on $\mathcal{H}_A$ and $\mathcal{H}_B$ respectively,
let $\mathfrak{P}(\rho_A,\rho_B)$ denote the set of all density matrices $\rho_{AB}$ on $\mathcal{H}_A\otimes \mathcal{H}_B$
such that
\begin{align*}
\tr_B \rho_{AB} =\rho_A,\\
\tr_A \rho_{AB} =\rho_B\ .
\end{align*}
\end{Def}
$\mathfrak{P}(\rho_A,\rho_B)$ can be viewed as the set of {\em quantum couplings} of $\rho_A$ and $\rho_B$. Generally, quantum couplings are not unique. Different quantum couplings represent different ways to share quantumness between two quantum states.

$\mathfrak{P}(\rho_1,\rho_2)$ is never empty; it always contains $\rho_A\otimes \rho_B$.

One can directly observe that $\mathfrak{P}(\rho_1,\rho_2)$ is convex and compact. Parathasarathy \cite{Parthasarathy2005} and Rudolph \cite{Rudolph2004}
characterized $\mathfrak{P}(\rho_1,\rho_2)$ by identifying its extreme points. The quantum coupling has also be used to study the quantum entropies by Winter \cite{Winter2016}.

In the classical earth mover's distance, the distance function $h(x,y)$ is defined on the extreme points of the set $X\times X$. We will use a function defined on the the extreme points of bipartite system, the set of all bipartite pure states.

\begin{Def}
Let $\mathcal{H}$ be a finite dimensional Hilbert space, $h:\mathcal{H}\otimes \mathcal{H}\mapsto \mathbb{R}^{+}$ be a nonnegative distance function.
The quantum earth mover's distance between $\rho_1, \rho_2$ over $\mathcal{H}$ is defined as
\begin{align*}
\inf_{ {\rho_{1,2}\in\mathfrak{P}(\rho_1,\rho_2)}}\inf_{\rho_{1,2}=\sum_i p_i\op{\psi_i}{\psi_i}}\sum_{i}p_i h(\op{\psi_i}{\psi_i}).
\end{align*}
where the second infimum is taken over all finite decompositions.
\end{Def}

One can have the following
\begin{theorem}
For any continuous function $h$ on finite dimensional Hilbert space $\mathcal{H}\otimes \mathcal{H}$, the second infimum is attained, and the optimal
ensemble can be chosen to have $d^4+1$ elements, where $d$ is the dimension of $\mathcal{H}$.
\end{theorem}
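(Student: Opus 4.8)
The plan is to recast the inner (``second'') infimum, with the coupling $\rho_{1,2}=\sigma\in\cD(\cH\ox\cH)$ held fixed, as the minimization of a linear functional over a compact convex set, and then invoke Carath\'eodory's theorem to bound the number of pure states needed. Write $D=d^2$ for the dimension of $\cH\ox\cH$ and let $V$ be the real vector space of Hermitian operators on $\cH\ox\cH$, so $\dim_{\R}V=D^2=d^4$. Let $\mathcal{P}\subseteq V$ be the set of pure states (rank-one projectors $\op{\psi}{\psi}$ with $\ket{\psi}$ a unit vector); it is the continuous image of the unit sphere of $\cH\ox\cH$, hence compact and connected. I would then consider the continuous map $\Phi\colon\mathcal{P}\to V\times\R\cong\R^{d^4+1}$ given by $\Phi(\pi)=(\pi,h(\pi))$ and set $K=\Phi(\mathcal{P})$; since $h$ is continuous, $K$ is compact (and connected). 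The point of this encoding is the \emph{dictionary}: a finite decomposition $\sigma=\sum_i p_i\op{\psi_i}{\psi_i}$ corresponds to the point $\sum_i p_i\Phi(\op{\psi_i}{\psi_i})=\bigl(\sigma,\sum_i p_i h(\op{\psi_i}{\psi_i})\bigr)$ of $\operatorname{conv}(K)$, and conversely every point of $\operatorname{conv}(K)$ whose $V$-coordinate is $\sigma$ comes from such a decomposition whose cost is exactly its last coordinate. Hence the second infimum equals the infimum of the last coordinate over the slice $\operatorname{conv}(K)\cap(\{\sigma\}\times\R)$, which is nonempty because $\sigma$ has a spectral decomposition.

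For attainment: in the finite-dimensional space $\R^{d^4+1}$ the convex hull of the compact set $K$ is again compact, so intersecting with the affine line $\{\sigma\}\times\R$ yields a compact convex subset of a line --- a point or a closed bounded interval --- on which the last coordinate attains its minimum, say at $(\sigma,m)$. This already shows the infimum is attained (and this is the one place continuity of $h$ is genuinely used, namely to keep $K$, and hence $\operatorname{conv}(K)$, closed). For the cardinality bound I would argue that $(\sigma,m)$ lies on the boundary of $\operatorname{conv}(K)$: if it were an interior point then $(\sigma,m-\varepsilon)$ would still lie in $\operatorname{conv}(K)$ for small $\varepsilon>0$, contradicting minimality of $m$. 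A boundary point of a convex body sits on a supporting hyperplane $H$, and for a supporting hyperplane one has $\operatorname{conv}(K)\cap H=\operatorname{conv}(K\cap H)$, since any convex combination of points of $K$ that lands on $H$ must be supported on $K\cap H$. As $K\cap H$ lives in the $d^4$-dimensional affine space $H$, Carath\'eodory gives $(\sigma,m)=\sum_{i=1}^{d^4+1}p_i\,\Phi(\op{\psi_i}{\psi_i})$; reading off the two coordinates yields $\sigma=\sum_i p_i\op{\psi_i}{\psi_i}$ and $\sum_i p_i h(\op{\psi_i}{\psi_i})=m$, i.e.\ an optimal ensemble with $d^4+1$ elements. (If $\operatorname{conv}(K)$ happens to be lower-dimensional the bound is immediate from plain Carath\'eodory; alternatively, the whole counting step is the Fenchel--Bunt refinement of Carath\'eodory applied to the connected set $K$.)

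I expect the main obstacle to be essentially bookkeeping rather than a deep idea: one must be careful that the relevant convex set really is compact (finite dimensionality together with continuity of $h$), and the only step with real content is shaving the naive Carath\'eodory count $d^4+2$ down to the claimed $d^4+1$, which is exactly what the supporting-hyperplane argument --- equivalently, the connectedness of $K$ via Fenchel--Bunt --- provides.
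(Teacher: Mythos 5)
Your proof is correct, and it rests on the same two ingredients as the paper's argument: Carath\'eodory's theorem applied to pairs $(\op{\psi}{\psi},h(\op{\psi}{\psi}))$, plus a compactness/continuity step. The organization is different, though. The paper first compresses an \emph{arbitrary} finite decomposition: it applies Carath\'eodory to the convex hull of the finitely many points $(\op{\psi_i}{\psi_i},h(\op{\psi_i}{\psi_i}))$, which lies in a $d^4$-dimensional affine space because states have unit trace, so any decomposition can be replaced by one with at most $d^4+1$ elements and the same average cost; attainment then follows from continuity of $h$ and compactness once the infimum is restricted to ensembles of this bounded size. You instead form the global compact set $K$ of all such pairs, obtain attainment directly from compactness of $\operatorname{conv}(K)$ intersected with the line $\{\sigma\}\times\R$, and only then invoke Carath\'eodory at the optimal point. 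Your supporting-hyperplane (equivalently Fenchel--Bunt) refinement to shave the naive count $d^4+2$ down to $d^4+1$ is valid but unnecessary: every element of $K$ has unit-trace first coordinate, so $\operatorname{conv}(K)$ already sits inside a $d^4$-dimensional affine subspace of $V\times\R$ and plain Carath\'eodory gives $d^4+1$ --- this is precisely the dimension count the paper uses, and it is the ``lower-dimensional'' case you mention only parenthetically. A mild advantage of your route is that attainment comes from compactness of $\operatorname{conv}(K)$ alone, without having to parametrize the set of $(d^4+1)$-element ensembles; a mild cost is the extra boundary/supporting-hyperplane bookkeeping, which the trace constraint renders moot.
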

\begin{proof}
 First, let us show that for any finite decomposition $\rho=\sum_{i=1}^np_i\op{\psi_i}{\psi_i}$,
we can provide a decomposition $\rho=\sum_{i=1}^{d^4+1} q_i\op{\phi_i}{\phi_i}$ with $d^4+1$ elements, such that
\be
\sum_{i=1}^np_i h(\op{\psi_i}{\psi_i})=\sum_{i=1}^{d^4+1} q_i h(\op{\phi_i}{\phi_i}).
\ee
To this, consider convex hull ${\cal A}$ of the set $\{ (\op{\psi_i}{\psi_i}, h(\op{\psi_i}{\psi_i}) )\}_{i=1}^n$.
One can conclude that $x=(\sum_{i=1}^n p_i \op{\psi_i}{\psi_i},\sum_{i=1}^n p_i h(\op{\psi_i}{\psi_i}))$ belongs ${\cal A}$.
The set ${\cal A}$ is a compact convex set, actually a polyhedron,
in $d^4$-dimensional real affine space (this comes from the fact that states belongs to the real $d^4$ dimensional
space of Hermitian operators and have unit trace). The set of extremal points is included
in the set $\{ (\op{\psi_i}{\psi_i}, h(\op{\psi_i}{\psi_i}) )\}_{i=1}^n$.
Then from Caratheodory theorem it follows that $x$ can be written as a convex combination of
at most $d^4+1$ extremal points, i.e. $x=\sum_{i_j} q_{i_j} (\op{\psi_{i_j}}{\psi_{i_j}}, h(\op{\psi_{i_j}}{\psi_{i_j}})$
where $j=1,\ldots d^4+1$. Writing $q_{i_j}=q_j$, $\op{\psi_{i_j}}{\psi_{i_j}}=\op{\phi_j}{\phi_j}$ we get
$\sum_{i=1}^n p_i \op{\psi_i}{\psi_i}=\sum_{j=1}^{d^4+1} q_j \op{\phi_j}{\phi_j}$ and $\sum_{i=1}^np_i h(\op{\psi_i}{\psi_i})=\sum_{j=1}^{d^4+1}q_j h(\op{\phi_j}{\phi_j})$.
 Thus we have found a decomposition that has $d^4+1$ elements, and returns the same value of average,
so that the infimum can be taken solely over such decompositions.
Then from continuity of the function and compactness of the set of states it follows that the infimum is attained.
\end{proof}

\subsection{Kantorovich-Rubinstein theorem}

Let $(X,d)$ be a metric space, let $\mathfrak{B}(X)$ denote the set of all Borel subsets of $X$. We are given two regular Borel measures $\mu,\nu$ on $X$. We use $\mathfrak{P}(\mu,\nu)$ to denote the set of regular probabilistic Borel measures $\pi$ on the topological product $X\times X$ such that for all $E\in\mathfrak{B}(X)$,
\begin{align*}
\mu(E)=\pi(E\times X),\\
\nu(E)=\pi(X\times E).
\end{align*}
In other words, $\mathfrak{P}(\mu,\nu)$ denotes the set of coupling of $\mu$ and $\nu$.

For $f:X\mapsto \mathrm{R}$, we define the expression $||f||_L$ by the equation
\begin{align*}
||f||_L=\sup \{\frac{|f(x)-f(y)|}{d(x,y)}:x,y\in X;x\neq y\}
\end{align*}
Then the Kantorovich-Rubinstein theorem states that, for compact space $(X,d)$, we have
\begin{align*}
\inf_{\pi\in\mathfrak{P}(\mu,\nu)}\int_{X\times X} d(x,y)\pi({\rm{dxdy}})=\sup\{\int_X f {\rm {d}}\mu-\int_X f\ {\rm{d}} {\nu}:||f||_L\leq 1\}.
\end{align*}

For discrete metric space $(X,d)$ with $d(x,y)=1$ if $x\neq y$.
\[
\sup\{\int_X fd\mu-\int_X fd\nu: ||f||_L\leq 1\}=D({\mu},{\nu})=\frac{1}{2}\sum_{x}|\mu(x)-\nu(x)|.
\]
The Kantorovich-Rubinstein theorem becomes its ``baby" version, see, e.g., \cite{Lindvall2002,Levin2009}.
  Let $\mu$ and $\nu$ be distributions over $X$ and let $\pi$ be a
  coupling. Then
  \[
    \frac{1}{2}\sum_{x}|\mu(x)-\nu(x)|=\inf_{\pi\in\mathfrak{P}(\mu,\nu)} \Pr_{(x,y) \sim \mu} [ x\neq y ] .
  \]
Based on this result, the so-called
\emph{coupling method} \cite{Aldous1983} is developed to show two
probabilistic processes converge by constructing a coupling that causes the
processes to become equal with high probability.

\subsection{No-go quantum Kantorovich-Rubinstein theorem}
In this subsection, we show a no-go about quantum generalization of Kantorovich-Rubinstein theorem.

Notice that for classical earth mover's distance, as well as the ``baby"  version of the Kantorovich-Rubinstein theorem, $\inf_{\pi\in\mathfrak{P}(\mu,\nu)}\sum_{x,y\in X}h(x,y)\pi(x,y)$ can be regarded as a linear programming, the infimum inner product of the distance vector and the probabilistic coupling.

A quantum generalization is to take the distance function $h:\mathcal{H}\otimes\mathcal{H}\mapsto \mathbb{R}^+$ as following
\[
h(\op{\psi}{\psi})=\tr(H\op{\psi}{\psi}).
\]
To make sure $h$ is a distance like positive function, we assume $H$ be semi-definite positive.

The Kantorovich-Rubinstein theorem, including its ``baby" version, can be interpreted as the following, the distance between two distributions is characterized by earth mover's distance.

We can actually show a no-go theorem about quantum generalization of Kantorovich-Rubinstein theorem.
\begin{theorem}\label{qKantorovich}
The quantum earth mover's distance can not characterize trace distance between quantum states. More precisely, there is no Hermitian $H$ and a bijection $f$ such that for any $\rho_1$ and $\rho_2$
\[
D(\rho_1,\rho_2)=f(\min_{ {\rho_{1,2}\in\mathfrak{P}(\rho_1,\rho_2)}}\tr(H\rho_{1,2}))
\]
\end{theorem}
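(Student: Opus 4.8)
The plan is to derive a contradiction by comparing the behaviour of the two sides of the claimed identity on carefully chosen pairs of states. First I would reduce the problem: since $\min_{\rho_{1,2}\in\mathfrak{P}(\rho_1,\rho_2)}\tr(H\rho_{1,2})$ is computed over the compact convex set of couplings, and $\mathfrak{P}(\rho_1,\rho_2)$ always contains the product $\rho_1\otimes\rho_2$, the quantity $g(\rho_1,\rho_2):=\min_{\rho_{1,2}}\tr(H\rho_{1,2})$ is at most $\tr(H\,\rho_1\otimes\rho_2)$, which is a continuous (in fact bilinear) function of the pair. In particular $g$ is continuous on $\cD(\cH)\times\cD(\cH)$, and obviously symmetric if we want, but more importantly it is a function of the pair that must, under the hypothesis, determine $D(\rho_1,\rho_2)$ through a fixed bijection $f$. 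So $D(\rho_1,\rho_2)=D(\sigma_1,\sigma_2)$ would force $g(\rho_1,\rho_2)=g(\sigma_1,\sigma_2)$ and conversely.

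The key step is to exhibit two pairs of states that have the same trace distance but different values of $g$, or the same value of $g$ but different trace distances. The cleanest route is to fix two states with $D(\rho_1,\rho_2)=0$, i.e.\ $\rho_1=\rho_2=\rho$: then the unique requirement is that $g(\rho,\rho)$ take one fixed value, namely $f^{-1}(0)$, independent of $\rho$. So I would show that $\rho\mapsto g(\rho,\rho)=\min_{\rho_{1,2}\in\mathfrak{P}(\rho,\rho)}\tr(H\rho_{1,2})$ is \emph{not} constant over all $\rho\in\cD(\cH)$, for every nonzero Hermitian $H$. For instance, comparing a maximally mixed $\rho=\I/d$ against a pure state $\rho=\psi$: when $\rho=\psi$ is pure, $\mathfrak{P}(\psi,\psi)=\{\psi\otimes\psi\}$ is a singleton, so $g(\psi,\psi)=\tr(H\,\psi\otimes\psi)=\langle\psi|\otimes\langle\psi|\,H\,|\psi\rangle\otimes|\psi\rangle$, which already varies with $\psi$ unless $H$ restricted to the symmetric subspace is a multiple of the identity there; and even in that exceptional case one can separate $g(\I/d,\I/d)$ from $g(\psi,\psi)$ by a direct computation, since the former optimises over a large coupling polytope (containing, e.g., $\frac1d\sum_{ij}|ij\rangle\langle ij|$, the symmetric-subspace projector normalised, and many non-classical couplings) while the latter is pinned to a single point. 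Carrying out this separation for the two remaining sub-cases (generic $H$, and $H$ proportional to $\I$ on the symmetric subspace) completes the argument.

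The main obstacle I anticipate is the case analysis on $H$: one must rule out \emph{every} Hermitian $H$, so the proof has to handle the degenerate possibilities — $H=0$ (trivially $g\equiv 0$, so $f$ would have to map the single value $0$ to every trace distance, impossible), $H$ a positive multiple of $\I$ (then $\tr(H\rho_{1,2})=c\tr\rho_{1,2}=c$ is constant, same problem), and the more delicate situation where $H$ acts nontrivially only in ways that happen to coincide on all product couplings of equal marginals. For the last one I would lean on the extreme-point characterisation of $\mathfrak{P}(\rho_1,\rho_2)$ due to Parthasarathy and Rudolph cited above, or simply note that for $\rho_1=\rho_2=\I/2$ in dimension $d=2$ the coupling set is two-dimensional and one can explicitly minimise $\tr(H\rho_{1,2})$ and check it differs from the pure-state value $\langle\psi|^{\otimes2}H|\psi\rangle^{\otimes2}$ for a suitable $\psi$. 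Once non-constancy of $\rho\mapsto g(\rho,\rho)$ is established, the existence of a bijection $f$ with $D(\rho_1,\rho_2)=f(g(\rho_1,\rho_2))$ is contradicted immediately by taking $\rho_1=\rho_2$ ranging over states with distinct $g$-values, all of which must map to $D=0$, violating injectivity of $f$.
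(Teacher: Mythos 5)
Your reduction to equal pairs $\rho_1=\rho_2$ has a genuine gap: the central claim that $\rho\mapsto g(\rho,\rho)=\min_{\rho_{1,2}\in\mathfrak{P}(\rho,\rho)}\tr(H\rho_{1,2})$ is non-constant for every Hermitian $H$ (other than the multiples of $I$ you set aside) is false. Take $H=P_{as}$, or more generally $H=aI+bP_{as}$ with $b\ge 0$. For \emph{any} state $\rho=\sum_i p_i\op{i}{i}$ the coupling $\sum_i p_i \op{ii}{ii}$ lies in $\mathfrak{P}(\rho,\rho)$ and is supported on the symmetric subspace, so $\tr(P_{as}\,\cdot)=0$ on it and hence $g(\rho,\rho)=a$ for every $\rho$, pure or maximally mixed alike. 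This sits exactly inside the ``exceptional case'' you acknowledge ($P_sHP_s$ a multiple of $P_s$), but the separation you propose there, $g(I/d,I/d)\neq g(\psi,\psi)$, simply does not hold for these $H$: both equal $a$. Since constancy of $g$ on the diagonal is perfectly consistent with a bijection $f$ having $f(a)=0$, no contradiction can be extracted from pairs with $D(\rho_1,\rho_2)=0$; to kill these $H$ you are forced to compare pairs with $\rho_1\neq\rho_2$, which your argument never does. And this is not a peripheral case one could hope to discard: $H=P_{as}$ is precisely the hardest candidate, the one the theorem is really about (the paper itself notes that $P_{as}$ \emph{does} characterize the discrete metric once the bijection requirement on $f$ is dropped).

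For comparison, the paper proceeds differently: using unitary invariance of the trace distance together with injectivity of $f$, evaluated on pure product couplings (where the coupling is unique), it first forces $H=\lambda_1 I+\lambda_2 P_{as}$; it then reduces to $H=P_{as}$ and pins down what $f$ would have to be, namely $f(x)=2\sqrt{x(1-x)}$ on $[0,\tfrac12]$, by exhibiting a family of pairs $\rho_1\neq\rho_2$ (via Lemma 8 and Example 5) with $D(\rho_1,\rho_2)=2\sqrt{\mu(1-\mu)}$ and $\min_{\rho_{1,2}}\tr(P_{as}\rho_{1,2})=\mu$; finally it contradicts this form of $f$ on pure-state pairs, where $\min_{\rho_{1,2}}\tr(P_{as}\rho_{1,2})=\tfrac{1-|\ip{\alpha}{\beta}|^2}{2}$ but $D(\op{\alpha}{\alpha},\op{\beta}{\beta})=\sqrt{1-|\ip{\alpha}{\beta}|^2}$. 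Some such use of genuinely distinct marginals (and of specific coupling optimizations like these) is unavoidable, so your proposal as written cannot be patched merely by sharpening the ``direct computation'' in the exceptional case.
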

If we remove the bijection restriction of $f$ and the choice of trace distance, then the statement is wrong: One can choose $H=P_{as}$ and $f:[0,1]\mapsto \{0,1\}$ such that $f(x)=0$ iff $x=0$. Naturally, this can characterize the discrete metric $d(\rho_1,\rho_2)=1$ if $\rho_1\neq \rho_2$, and $d(\rho,\rho)=0$.

Before proving this, we first study the bipartite marginal problem.

Consider a bipartite Hilbert space $\cH=\cH_A\otimes\cH_B$ with $d=d_{\cH_A}=d_{\cH_B}$. We study the relation on the marginal and symmetry of quantum states in $\cH$.

Before introducing our results, we start from the following notions.
$I_{\cH}$ is used to denote the identity operator of $\cH$.
$$S=\sum_{i,j=0}^{d-1}\op{ij}{ji}$$
denotes the SWAP operator in $\cH$ with the following property
$$S\ket{\alpha}_A\ket{\beta}_B=\ket{\beta}_A\ket{\alpha}_B$$
for all $\ket{\alpha}\in\cH_A$ and $\ket{\beta}\in\cH_B$.

Let $P_{s}=\frac{1}{2}(I+S)$ and $P_{as}=\frac{1}{2}(I-S)$ denote the projections onto the symmetric subspace and antisymmetric subspaces, respectively.

For any quantum state $\rho_{AB}$, we can define the following two probabilities,
\begin{Def}\label{def:probabilities}
\begin{align*}
\vec{p}(\rho_{AB})=(\tr(P_{as}\rho_{AB}),\tr(P_{s}\rho_{AB})),\\
\vec{q}(\rho_{AB})=(\tr(P_{s}\rho_{AB}),\tr(P_{as}\rho_{AB})).
\end{align*}
\end{Def}

Let the standard maximally entangled state be $\ket{\Phi}_{AB}=\frac{1}{\sqrt{d}}\sum_{i=0}^{d-1}\ket{i}\ket{i}$, we can write any pure state $\ket{\psi}\in\cH$ into the form
$$\ket{\psi}=(M\otimes I)=(I\otimes M^{T})\ket{\Phi}$$
with square matrix $M$ satisfying $||M||_2=1$.

Our first result is about the relation between the symmetric property of bipartite state and the fidelity between its marginals.
\begin{lemma}
For $\rho_{AB}\in \cD(\cH)$, we have
\begin{align*}
F(\rho_A,\rho_B)\geq |\tr(P_{as}\rho_{AB})-\tr(P_{s}\rho_{AB})|=D(\vec{p}(\rho_{AB}),\vec{q}(\rho_{AB})).
\end{align*}
\end{lemma}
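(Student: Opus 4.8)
The plan is to collapse the claimed three-term chain to a single inequality and then prove that inequality with one use of purifications and Uhlmann's theorem.

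First I would normalize notation and discharge the right-hand equality. Since $P_s+P_{as}=I$ and $P_s-P_{as}=S$, we get $\tr(P_s\rho_{AB})+\tr(P_{as}\rho_{AB})=1$ and $\tr(P_s\rho_{AB})-\tr(P_{as}\rho_{AB})=\tr(S\rho_{AB})$; the vectors $\vec{p}(\rho_{AB})$ and $\vec{q}(\rho_{AB})$ are the same pair of probabilities written in opposite order, so $D(\vec{p}(\rho_{AB}),\vec{q}(\rho_{AB}))=|\tr(P_{as}\rho_{AB})-\tr(P_s\rho_{AB})|=|\tr(S\rho_{AB})|$. Hence the whole statement is equivalent to
\[
F(\rho_A,\rho_B)\ \ge\ |\tr(S\rho_{AB})|.
\]

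The key observation is that a single purification of $\rho_{AB}$ already carries purifications of both of its marginals, related by the swap. I would identify $\cH_A$ with $\cH_B$ (via the distinguished bases already used to define $S$) and pick any purification $\ket{\Psi}\in\cH_A\otimes\cH_B\otimes\cH_R$ of $\rho_{AB}$. Viewing $\cH_B\otimes\cH_R$ as the purifying system, $\ket{\Psi}$ is a purification of $\rho_A$, and I claim that $(S\otimes I_R)\ket{\Psi}$, with the same purifying system, is a purification of $\rho_B$: because $S$ acts only on $\cH_A\otimes\cH_B$,
\[
\tr_{B,R}\bigl[(S\otimes I_R)\op{\Psi}{\Psi}(S\otimes I_R)\bigr]=\tr_B\bigl[S\rho_{AB}S\bigr],
\]
and $S\rho_{AB}S$ is $\rho_{AB}$ with its two tensor factors interchanged, whose marginal on the first factor is exactly $\rho_B$. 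By Uhlmann's theorem (Fact~\ref{fac:Uhlmann}), the modulus of the inner product of any purification of $\rho_A$ with any purification of $\rho_B$ on the same space is at most $F(\rho_A,\rho_B)$; applying this to the two purifications above and using $S^\dagger=S$,
\[
F(\rho_A,\rho_B)\ \ge\ \bigl|\langle\Psi|(S\otimes I_R)|\Psi\rangle\bigr|=\bigl|\tr(S\rho_{AB})\bigr|,
\]
which is exactly the inequality we reduced to.

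I expect the only friction to be the bookkeeping in the previous step — making the $\cH_A\cong\cH_B$ identification explicit and checking that the swapped vector really has marginal $\rho_B$ on the $\cH_A$ slot; the evaluation $\langle\Psi|(S\otimes I_R)|\Psi\rangle=\tr((S\otimes I_R)\op{\Psi}{\Psi})=\tr(S\rho_{AB})$ is immediate once we trace out $R$. If one prefers to stay with the parametrization $\ket{\psi}=(M\otimes I)\ket{\Phi}=(I\otimes M^T)\ket{\Phi}$ just introduced, an equivalent route is to write $\rho_{AB}=\sum_k\lambda_k\op{\psi_k}{\psi_k}$, compute $\tr_B(\op{\psi_k}{\psi_k})$, $\tr_A(\op{\psi_k}{\psi_k})$ and $\langle\psi_k|S|\psi_k\rangle$ explicitly in $M_k$, establish the pure-state bound $|\langle\psi_k|S|\psi_k\rangle|\le F\bigl(\tr_B(\op{\psi_k}{\psi_k}),\tr_A(\op{\psi_k}{\psi_k})\bigr)$ using $F(\rho,\sigma)=\|\sqrt{\rho}\sqrt{\sigma}\|_1\ge|\tr(\sqrt{\rho}\sqrt{\sigma})|$ together with an AM--GM estimate on the entries of the unitary relating the singular-vector bases of $M_k$, and finally apply the strong concavity of fidelity recalled above with $p_k=q_k=\lambda_k$ to obtain $F(\rho_A,\rho_B)\ge\sum_k\lambda_k|\langle\psi_k|S|\psi_k\rangle|\ge|\tr(S\rho_{AB})|$.
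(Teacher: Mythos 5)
Your proof is correct, and its main route is genuinely different from (and shorter than) the paper's. The paper first treats pure states via the parametrization $\ket{\psi}=(M\otimes I)\ket{\Phi}$, uses Uhlmann's theorem with the maximization over unitaries to get $F(\psi_A,\psi_B)=\norm{MM^{*}}_1$, then runs trace-norm triangle inequalities on $E=(M+M^T)/2$ and $F=(M-M^T)/2$ to reach $\bigl|\norm{EE^{\dag}}_1-\norm{FF^{\dag}}_1\bigr|$, and finally lifts to mixed states through strong concavity of fidelity plus another triangle inequality. You instead collapse the statement to $F(\rho_A,\rho_B)\ge|\tr(S\rho_{AB})|$ via $P_s-P_{as}=S$, and prove it with a single application of Uhlmann's theorem: purify $\rho_{AB}$ by a reference system $R$, note that $\ket{\Psi}$ purifies $\rho_A$ while $(S\otimes I_R)\ket{\Psi}$ purifies $\rho_B$ (since $\tr_B[S\rho_{AB}S]=\rho_B$ under the identification $\cH_A\cong\cH_B$), and read off $F(\rho_A,\rho_B)\ge|\langle\Psi|(S\otimes I_R)|\Psi\rangle|=|\tr(S\rho_{AB})|$; this handles pure and mixed states at once, with no decomposition or concavity step, and the bookkeeping you flag does check out. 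What the paper's longer computation buys is the explicit intermediate information --- the identity $F(\psi_A,\psi_B)=\norm{MM^{*}}_1$ and the bound through $\norm{EE^{\dag}-FF^{\dag}}_1$ --- which is exactly the machinery reused in the tightness example and in the tripartite generalization with the operators $M_j$, whereas your one-shot argument delivers only the stated inequality (though it too would extend to the tripartite setting by taking $S_{AB}\otimes I_C$ in place of $S$). The fallback route you sketch at the end (pure-state decomposition plus strong concavity) is essentially the paper's strategy and is not needed given your main argument.
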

\begin{proof}
The proof is divided into two steps. In the first step, we show this statement is valid for pure states. In the second step, we prove it holds for any state by the concavity arguments.

STEP 1: Let $\ket{\psi}_{AB}=(M\otimes I)\ket{\Phi}=(I\otimes M^{T})\ket{\Phi}$. By choosing $$E=\frac{M+M^T}{2},\ \ F=\frac{M-M^T}{2},$$ then we have
\begin{align*}
\tr(P_{s}\psi_{AB})=||EE^{\dag}||_1,\ \ \  \tr(P_{as}\psi_{AB})=||FF^{\dag}||_1.
\end{align*}
Since we are considering the fidelity between $\psi_A$ and $\psi_B$, they are regarded living in the same space now. We use $\ket{\psi}_{AB}$ to be the purification of $\psi_A$, and $S\ket{\psi}_{AB}$ to be the purification of $\psi_B$.

According to Uhlmann's theorem (Fact \ref{fac:Uhlmann}), we know that
\begin{align*}
F(\psi_A,\psi_B)=\max_{U}|\langle{\psi}|(U\otimes I)S\ket{\psi}|=\max_{U}|\tr(M^{\dag}UM^T)|=\max_{U}|\tr(M^{*}UM)|=||MM^{*}||_1,
\end{align*}
where $U$ is ranging over all unitary of $\cH_A$.

Notice that $$||MM^{*}||_1=||(MM^{*})^{\dag}||_1=||M^{T}M^{\dag}||_1,$$ we have
\begin{align*}
F(\psi_A,\psi_B)=&||MM^{*}||_1\\
                =&||\frac{1}{2}MM^{*}||_1+||\frac{1}{2}M^{T}M^{\dag}||_1\\
                \geq& ||\frac{1}{2}MM^{*}+\frac{1}{2}M^{T}M^{\dag}||_1\\
                =&||\frac{1}{2}(E+F)(E+F)^{*}+\frac{1}{2}(E+F)^{T}(E+F)^{\dag}||_1\\
                =&||\frac{1}{2}(E+F)(E^T-F^T)^{*}+\frac{1}{2}(E-F)(E+F)^{\dag}||_1\\
                =&||\frac{1}{2}(E+F)(E^{\dag}-F^{\dag})+\frac{1}{2}(E-F)(E^{\dag}+F^{\dag})||_1\\
                =& ||EE^{\dag}-FF^{\dag}||_1\\
                \geq &|||EE^{\dag}||_1-||FF^{\dag}||_1|\\
                =&|\tr(P_{as}\psi_{AB})-\tr(P_{s}\psi_{AB})|,
\end{align*}
where the first inequality is according to the triangle inequality, so is the second inequality.

STEP 2: Now we are going to show this statement is true for general quantum states.
Assume $\rho_{AB}=\sum p_i\psi_i$ with $\psi_i$s being pure states, we can have
\begin{align*}
\rho_A=\sum p_i\psi_{iA},\ \ \ \ \rho_B=\sum p_i\psi_{iB}.
\end{align*}
Therefore, we have
\begin{align*}
F(\rho_A,\rho_B)&=F(\sum p_i\psi_{iA},\sum p_i\psi_{iB})\\
&\geq \sum p_i F(\psi_{iA},\psi_{iB})\\
&\geq \sum p_i|\tr(P_{as}\psi_{i})-\tr(P_{s}\psi_{i})|\\
&\geq |\tr(P_{as}\sum p_i\psi_{i})-\tr(P_{s}\sum p_i\psi_{i})|\\
&= |\tr(P_{as}\rho_{AB})-\tr(P_{s}\rho_{AB})|,
\end{align*}
where the first inequality is due to the strong concavity of fidelity, the second is due to the pure state case, and the third inequality is because of the triangle inequality.
\end{proof}

We observe that this bound is tight by studying the following example.
\begin{Exam}
For any $0\leq \mu \leq 1$, let
$$
\ket{\psi}_{AB}=\sqrt{\frac{1-\mu}{{2}}}(\ket{00}+\ket{11})+\sqrt{\frac{\mu}{{2}}}(i\ket{01}-i\ket{10}).
$$
Then one can verify that
\begin{align*}
M=\frac{1}{\sqrt{2}}{\begin{bmatrix}\sqrt{1-\mu}&i\mu\\ -i\mu&\sqrt{1-\mu}\\ \end{bmatrix}},\\
\tr(P_{s}\psi_{AB})=1-\mu, \ \  \tr(P_{as}\psi_{AB})=\mu.
\end{align*}
Then
\begin{align*}
MM^*=\frac{1-2\mu}{2}I\Rightarrow ||MM^{*}||_1=|1-2\mu|.
\end{align*}
Then
\begin{align*}
F(\psi_A,\psi_B)&=||MM^{*}||_1=|\mu-(1-\mu)|=|\tr(P_{as}\rho_{AB})-\tr(P_{s}\rho_{AB})|.
\end{align*}
\end{Exam}

We can obtain the following relation between the symmetric property of a bipartite state and the distinguishability between its marginals.
\begin{lemma}
For $\rho_{AB}\in \cD(\cH)$, we have
\begin{align*}
D(\rho_A,\rho_B)\leq 2\sqrt{\tr(P_{as}\rho_{AB})\tr(P_{s}\rho_{AB})}=F(\vec{p}(\rho_{AB}),\vec{q}(\rho_{AB})).
\end{align*}
\end{lemma}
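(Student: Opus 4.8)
The plan is to read the claim off the Lemma immediately above together with the elementary relation $F^{2}+D^{2}\le 1$, so that no fresh estimate is required. Write $a=\tr(P_{as}\rho_{AB})$ and $b=\tr(P_{s}\rho_{AB})$; since $P_{s}+P_{as}=I$ we have $a+b=\tr\rho_{AB}=1$. First I would apply Fact~\ref{fact:fdrelation} to the marginals to get $D(\rho_A,\rho_B)^{2}\le 1-F(\rho_A,\rho_B)^{2}$. Then I would feed in the preceding Lemma, $F(\rho_A,\rho_B)\ge|a-b|$; both sides are nonnegative, so squaring gives $F(\rho_A,\rho_B)^{2}\ge (a-b)^{2}$ and hence $D(\rho_A,\rho_B)^{2}\le 1-(a-b)^{2}$.

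The rest is one algebraic identity: $1-(a-b)^{2}=(1-a+b)(1+a-b)$, and substituting $1=a+b$ turns this into $(2b)(2a)=4ab$. Therefore $D(\rho_A,\rho_B)\le 2\sqrt{ab}=2\sqrt{\tr(P_{as}\rho_{AB})\tr(P_{s}\rho_{AB})}$, which is the middle term of the statement. The final equality is just the definition of fidelity for classical distributions: $\vec q(\rho_{AB})$ is $\vec p(\rho_{AB})=(a,b)$ with the two entries interchanged, so $F(\vec p(\rho_{AB}),\vec q(\rho_{AB}))=\sqrt{ab}+\sqrt{ba}=2\sqrt{ab}$. I would also record that the bound is tight, since the one-parameter family of pure states exhibited right after the preceding Lemma satisfies $D(\psi_A,\psi_B)=2\sqrt{\mu(1-\mu)}=2\sqrt{\tr(P_{as}\psi_{AB})\tr(P_{s}\psi_{AB})}$.

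In this route there is essentially no obstacle left: all of the content sits inside the preceding Lemma and the relation $F^{2}+D^{2}\le 1$, and the only thing to notice is that normalization, $a+b=1$, makes $1-(a-b)^{2}$ factor as $4ab$. If a self-contained proof is wanted instead, the work would be the two-step pattern used for the preceding Lemma. For a pure state $\ket{\psi}_{AB}=(M\otimes I)\ket{\Phi}$ one has $\psi_A=MM^{\dagger}$ and $\psi_B=M^{T}M^{*}$; writing $M=E+F$ with $E=(M+M^{T})/2$ symmetric and $F=(M-M^{T})/2$ antisymmetric, the ``diagonal'' terms $EE^{*}$ and $FF^{*}$ cancel and one is left with $\psi_A-\psi_B=2(FE^{*}-EF^{*})$, while, exactly as in that proof, $\tr(P_{s}\psi_{AB})=\norm{E}_{2}^{2}$ and $\tr(P_{as}\psi_{AB})=\norm{F}_{2}^{2}$. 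The triangle inequality together with H\"older's inequality $\norm{AB}_{1}\le\norm{A}_{2}\norm{B}_{2}$ then yields $D(\psi_A,\psi_B)=\norm{FE^{*}-EF^{*}}_{1}\le 2\norm{E}_{2}\norm{F}_{2}=2\sqrt{\tr(P_{s}\psi_{AB})\tr(P_{as}\psi_{AB})}$.

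To lift this to a general $\rho_{AB}=\sum_i p_i\psi_i$ I would use that the trace distance is jointly convex, $D(\rho_A,\rho_B)\le\sum_i p_i D(\psi_{iA},\psi_{iB})\le 2\sum_i p_i\sqrt{a_ib_i}$ with $a_i=\tr(P_{as}\psi_i)$, $b_i=\tr(P_{s}\psi_i)$, and then the Cauchy--Schwarz inequality, $\sum_i p_i\sqrt{a_ib_i}=\sum_i\sqrt{p_ia_i}\,\sqrt{p_ib_i}\le\sqrt{\sum_i p_ia_i}\,\sqrt{\sum_i p_ib_i}=\sqrt{\tr(P_{as}\rho_{AB})\tr(P_{s}\rho_{AB})}$. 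The one point worth flagging, and the genuine difference from the fidelity Lemma, is that the trace distance is convex rather than concave, so this last step needs Cauchy--Schwarz rather than the triangle inequality; the main computational nuisance in this alternative is the block cancellation giving $\psi_A-\psi_B=2(FE^{*}-EF^{*})$, which relies on the identities $E^{\dagger}=E^{*}$, $F^{\dagger}=-F^{*}$ and on symmetric and antisymmetric matrices being Hilbert--Schmidt orthogonal (so that $\norm{M}_{2}^{2}=\norm{E}_{2}^{2}+\norm{F}_{2}^{2}=1$).
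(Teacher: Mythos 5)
Your primary argument is exactly the paper's proof: apply Fact~\ref{fact:fdrelation} to the marginals, insert the preceding lemma's bound $F(\rho_A,\rho_B)\ge|\tr(P_{as}\rho_{AB})-\tr(P_{s}\rho_{AB})|$, and use the normalization $\tr(P_s\rho_{AB})+\tr(P_{as}\rho_{AB})=1$ to factor $1-(a-b)^2=4ab$, with the final equality being the classical fidelity of the two swapped binary distributions. The additional self-contained route you sketch (pure-state case via $\psi_A-\psi_B=2(FE^{*}-EF^{*})$, then convexity of $D$ plus Cauchy--Schwarz) is also correct, but it is extra; the proposal as given matches the paper's approach.
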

\begin{proof}
By the relation between fidelity and distance \ref{fact:fdrelation} and the above lemma, we have
\begin{align*}
D(\rho_A,\rho_B)\leq \sqrt{1-F^2(\rho_A,\rho_B)}\leq \sqrt{1-D^2(\vec{p}(\rho_{AB}),\vec{q}(\rho_{AB}))}=2\sqrt{\tr(P_{as}\rho_{AB})\tr(P_{s}\rho_{AB})}=F(\vec{p}(\rho_{AB}),\vec{q}(\rho_{AB})).
\end{align*}
\end{proof}

We observe that this bound is tight by studying the following example.
\begin{Exam}
For any $0\leq \mu \leq 1$, let
$$
\ket{\psi}_{AB}=\sqrt{\frac{1-\mu}{{2}}}(\ket{01}+\ket{10})+\sqrt{\frac{\mu}{{2}}}(\ket{01}-\ket{10}).
$$
Then one can verify that
\begin{align*}
E=\sqrt{\frac{1-\mu}{{2}}}{\begin{bmatrix}0&1\\ 1&0\\ \end{bmatrix}}, \tr(P_{s}\psi_{AB})=1-\mu,\\
F=\sqrt{\frac{\mu}{{2}}}{\begin{bmatrix}0&1\\ -1&0\\ \end{bmatrix}}, \tr(P_{as}\psi_{AB})=\mu.
\end{align*}
Then
\begin{align*}
EF^{\dag}+FE^{\dag}=\sqrt{[\mu(1-\mu)]}{\begin{bmatrix}-1&0\\ 0&1\\ \end{bmatrix}}.
\end{align*}
Then
\begin{align*}
D(\psi_A,\psi_B)&=||EF^{\dag}+FE^{\dag}||_1=2\sqrt{[\mu(1-\mu)]}=2\sqrt{\tr(P_{as}\rho_{AB})\tr(P_{s}\rho_{AB})}.
\end{align*}
\end{Exam}

Now we present the proof of Theorem \ref{qKantorovich}
\begin{proof}
Now the earth mover's distance between two quantum states $\rho_1,\rho_2$ satisfies
\begin{align*}
\inf_{ {\rho_{1,2}\in\mathfrak{P}(\rho_1,\rho_2)}}\inf_{\rho_{1,2}=\sum_i p_i\op{\psi_i}{\psi_i}}\sum_{i}p_i h(\op{\psi_i}{\psi_i})=\inf_{ {\rho_{1,2}\in\mathfrak{P}(\rho_1,\rho_2)}}\tr(H\rho_{1,2})=\min_{ {\rho_{1,2}\in\mathfrak{P}(\rho_1,\rho_2)}}\tr(H\rho_{1,2})
\end{align*}
for some $H\geq 0$.

We assume that there is some unitary invariant distance measure $d(\cdot,\cdot)$ such that for quantum states $\rho_1$, $\rho_2$, such that it is uniquely determined by their earth mover's distance. In other words, there is a bijection $f$ such that
\[
d(\rho_1,\rho_2)=f(\min_{ {\rho_{1,2}\in\mathfrak{P}(\rho_1,\rho_2)}}\tr(H\rho_{1,2})).
\]
For pure states $\ket{\alpha}$ and $\ket{\beta}$, their coupling is unique, $\ket{\alpha}\ket{\beta}$. As $d$ is unitary invariant, we have for any $U$
\[
d(\rho_1,\rho_2)=d(U\rho_1U^{\dag},U\rho_2U^{\dag}).
\]
That is
\[
f(\tr(H(\op{\alpha}{\alpha}\otimes\op{\beta}{\beta})))=f(\tr(H(U\op{\alpha}{\alpha}U^{\dag}\otimes U\op{\beta}{\beta}U^{\dag}))).
\]
According to the fact that $f$ is bijection, one can conclude that
\[
\tr(H(\op{\alpha}{\alpha}\otimes\op{\beta}{\beta}))=\tr(H(U\op{\alpha}{\alpha}U^{\dag}\otimes U\op{\beta}{\beta}U^{\dag}))=\tr((U^{\dag}\otimes U^{\dag})H(U\otimes U)(\op{\alpha}{\alpha}\otimes \op{\beta}{\beta})).
\]
That is for any $\ket{\alpha}$, $\ket{\beta}$ and unitary $U$
\[
\tr((H-(U^{\dag}\otimes U^{\dag})H(U\otimes U))(\op{\alpha}{\alpha}\otimes \op{\beta}{\beta}))=0
\]
Notice that $\op{\alpha}{\alpha}\otimes \op{\beta}{\beta}$ forms a basis of the space of the linear operators in $\mathcal{H}\otimes\mathcal{H}$, we have
\[
\tr((H-(U^{\dag}\otimes U^{\dag})H(U\otimes U))^2)=0
\]
Then,
\[
H=(U^{\dag}\otimes U^{\dag})H(U\otimes U)
\]
This is equivalent to the fact that $H$ is a linear combination of $P_s$ and $P_{as}$. In other words, there exist $\lambda_1,\lambda_2$ such that
\[
H=\lambda_1 I+\lambda_2 P_{as}
\]
Notice that the $I$ component is useless as we can shift the value of $\tr(H\rho_{1,2})$. By scaling, we only need to study $H=P_{as}$.

For any $0\leq\mu\leq \frac{1}{2}$, as Example 4, we choose
\begin{align*}
\rho_1=\frac{1}{2}I+\sqrt{[\mu(1-\mu)]}{\begin{bmatrix}-1&0\\ 0&1\\ \end{bmatrix}},\\
\rho_2=\frac{1}{2}I-\sqrt{[\mu(1-\mu)]}{\begin{bmatrix}-1&0\\ 0&1\\ \end{bmatrix}}.
\end{align*}
According to Lemma 8, we know that
\[
2\sqrt{\mu(1-\mu)}=D(\rho_1,\rho_2)\leq 2\sqrt{\tr(P_{as}\rho_{1,2})\tr(P_{s}\rho_{1,2})}
\]
Then we have
\[
\min_{ {\rho_{1,2}\in\mathfrak{P}(\rho_1,\rho_2)}}\tr(H\rho_{1,2})=\mu.
\]
As the function $f$ is assumed to be bijection, we know that for any $0\leq x\leq \frac{1}{2}$, the function must be in the form
\[
f(x)=2\sqrt{x(1-x)}.
\]
However, consider pure states $\rho_1=\op{\alpha}{\alpha}$ and $\rho_2=\op{\beta}{\beta}$, we have
\[
\min_{ {\rho_{1,2}\in\mathfrak{P}(\rho_1,\rho_2)}}\tr(H\rho_{1,2})=\frac{1-|\ip{\alpha}{\beta}|^2}{2}.
\]
The distance between $\rho_1=\op{\alpha}{\alpha}$ and $\rho_2=\op{\beta}{\beta}$ can not be determined by function $f(x)=2\sqrt{x(1-x)}$
\[
D(\op{\alpha}{\alpha},\op{\beta}{\beta})\neq \sqrt{(1-|\ip{\alpha}{\beta}|^2)(1+|\ip{\alpha}{\beta}|^2)}.
\]
This completes the proof.
\end{proof}

One can use similar idea to prove that the statement is true if we use infidelity $1-F(\rho_1,\rho_2)$ as the distance measure. Firstly, by Lemma 7 and Example 4, we know that if such function $f$ does exists, the the function must be
\[
f(x)=1-|1-2x|,
\]
for $0\leq x\leq \frac{1}{2}$.

However, for pure states $\rho_1=\op{\alpha}{\alpha}$ and $\rho_2=\op{\beta}{\beta}$, we have
\[
\min_{ {\rho_{1,2}\in\mathfrak{P}(\rho_1,\rho_2)}}\tr(H\rho_{1,2})=\frac{1-|\ip{\alpha}{\beta}|^2}{2}.
\]
The infidelity between $\rho_1=\op{\alpha}{\alpha}$ and $\rho_2=\op{\beta}{\beta}$ can not be determined by function $f(x)=1-|1-2x|$
\[
1-F(\op{\alpha}{\alpha},\op{\beta}{\beta})=1-|\ip{\alpha}{\beta}|\neq1-|1-2\frac{1-|\ip{\alpha}{\beta}|^2}{2}|=1-|\ip{\alpha}{\beta}|^2.
\]

\section{Quantum Kantorovich-Rubinstein Inequalities}
Although in equality version of quantum Kantorovich-Rubinstein theorem is not possible, we derive inequalities which can be regarded as quantum generalization of Kantorovich-Rubinstein theorem in this section. In particular, we show that
\begin{theorem}\label{Thm-fidelity}
For $\rho_A,\rho_B$ be quantum states with the same dimension,
$$
\frac{1+F^2(\rho_A,\rho_B)}{2}\leq \max_{\rho_{A,B}\in\mathfrak{P}(\rho_A,\rho_B)}\tr(P_s\rho_{AB})\leq \frac{1+F(\rho_A,\rho_B)}{2}
$$
For diagonal density operators $\Lambda_A$ and $\Lambda_B$ with the same dimension $d$, we can obtain a slightly different version
$$
F(\Lambda_A,\Lambda_B)+\frac{\min_{i}(\sqrt{\lambda_{A,i}}-\sqrt{\lambda_{B,i}})^2}{2}\leq \max_{\rho_{A,B}\in\mathfrak{P}(\rho_A,\rho_B)}\tr(P_s\rho_{AB})\leq \frac{1+F(\rho_A,\rho_B)}{2},
$$
where $\Lambda_A=\mathrm{diag}\{\lambda_{A,1},\cdots,\lambda_{A,d}\}$ and $\Lambda_B=\mathrm{diag}\{\lambda_{B,1},\cdots,\lambda_{B,d}\}$ .
\end{theorem}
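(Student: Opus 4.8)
The plan is to dispatch the common upper bound quickly and then attack the two lower bounds by exhibiting suitable couplings; throughout it helps to use $\tr(P_s\rho_{AB})=\tfrac12\bigl(1+\tr(S\rho_{AB})\bigr)$ and $\tr(P_{as}\rho_{AB})=1-\tr(P_s\rho_{AB})$. Both displayed chains end in $\max_{\rho_{A,B}\in\mathfrak P(\rho_A,\rho_B)}\tr(P_s\rho_{AB})\le\tfrac12\bigl(1+F(\rho_A,\rho_B)\bigr)$, and I would get this in one line from the earlier lemma stating $F(\rho_A,\rho_B)\ge|\tr(P_{as}\rho_{AB})-\tr(P_s\rho_{AB})|$: any $\rho_{AB}\in\mathfrak P(\rho_A,\rho_B)$ has marginals $\rho_A,\rho_B$, so that lemma reads $|1-2\tr(P_s\rho_{AB})|\le F(\rho_A,\rho_B)$, i.e.\ $\tr(P_s\rho_{AB})\le\tfrac12(1+F(\rho_A,\rho_B))$; maximising over couplings finishes it. (Alternatively this is $\sum_k p_k\langle\psi_k|S|\psi_k\rangle\le\sum_k p_kF(\psi_{kA},\psi_{kB})\le F(\rho_A,\rho_B)$ for $\rho_{AB}=\sum_k p_k\psi_k$, from the pure-state case of that lemma plus strong concavity of fidelity.) So the content is the two lower bounds.

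For the lower bounds I would use couplings $\rho_{AB}=\tr_R\op\Psi\Psi$ with $\ket\Psi=\sum_{i,j}\ket{ij}_{AB}\ket{\mu_{ij}}_R$ for ancilla vectors $\ket{\mu_{ij}}$ in a purifying space $\H_R$: then ``$\rho_{AB}$ is a coupling'' becomes $\sum_j\langle\mu_{i'j}|\mu_{ij}\rangle=(\rho_A)_{ii'}$ and $\sum_i\langle\mu_{ij'}|\mu_{ij}\rangle=(\rho_B)_{jj'}$, while $\tr(S\rho_{AB})=\langle\Psi|S\ox I|\Psi\rangle=\sum_{i,j}\langle\mu_{ij}|\mu_{ji}\rangle$. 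The aim is to make the diagonal vectors $\ket{\mu_{ii}}$ long and to align each off-diagonal pair, $\ket{\mu_{ij}}\parallel\ket{\mu_{ji}}$, so as to drive this quantity up. In the commuting case, diagonalise $\rho_A=\Lambda_A$, $\rho_B=\Lambda_B$ in a common basis and put $\ket{\mu_{ij}}=\sqrt{\pi_{ij}}\,\ket{g_{\{i,j\}}}$, where $\pi$ is any classical coupling of the eigenvalue vectors and $\{\ket{g_{\{i,j\}}}\}_{i\le j}$ is orthonormal; the marginals then come out $\Lambda_A,\Lambda_B$ and $\tr(S\rho_{AB})=\sum_i\pi_{ii}+2\sum_{i<j}\sqrt{\pi_{ij}\pi_{ji}}=\tr(X^2)$ with $X_{ij}:=\sqrt{\pi_{ij}}$. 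Taking the product coupling, i.e.\ $X=\op uv$ with $u_i=\sqrt{\lambda_{A,i}}$, $v_j=\sqrt{\lambda_{B,j}}$, gives $\tr(X^2)=\langle v|u\rangle^2=F(\Lambda_A,\Lambda_B)^2$, hence $\max\tr(P_s\rho_{AB})\ge\tfrac12(1+F^2)$; note $\rho_A\otimes\rho_B$ alone yields only $\tfrac12(1+\tr(\rho_A\rho_B))\le\tfrac12(1+F^2)$, so the coherence built into the $\ket{\mu_{ij}}$ is what raises $\tr(\rho_A\rho_B)$ to $F^2$. For non-commuting $\rho_A,\rho_B$ the orthonormal-$g$ ansatz cannot reproduce the off-diagonal entries of $\rho_B$, and I would instead build the coupling from the polar decomposition $\sqrt{\rho_A}\sqrt{\rho_B}=U\,|\sqrt{\rho_A}\sqrt{\rho_B}|$ — the object whose trace norm is exactly $F$ — arranged so that it degenerates to the product coupling when $\rho_A,\rho_B$ commute; checking that this is a genuine coupling and that $\tr(S\rho_{AB})\ge F^2(\rho_A,\rho_B)$ in the absence of a shared eigenbasis is, I expect, the main obstacle.

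For the refined diagonal bound I would optimise over $\pi$ more carefully. Since $\sum_{i,j}X_{ij}^2=\sum_i\lambda_{A,i}=1$ for any classical coupling, one has $\tr(X^2)=1-\tfrac12\|X-X^T\|_2^2$, so $\max\tr(P_s\rho_{AB})\ge 1-\tfrac14\|X-X^T\|_2^2$; using $\sum_i(\sqrt{\lambda_{A,i}}-\sqrt{\lambda_{B,i}})^2=2\bigl(1-F(\Lambda_A,\Lambda_B)\bigr)$, the asserted inequality is equivalent to producing an entrywise-nonnegative $X$ with row-square-sums $\lambda_{A,i}$ and column-square-sums $\lambda_{B,j}$ for which, with $m$ a minimiser of $i\mapsto(\sqrt{\lambda_{A,i}}-\sqrt{\lambda_{B,i}})^2$,
\[
\|X-X^T\|_2^2\ \le\ 2\sum_{i\ne m}\bigl(\sqrt{\lambda_{A,i}}-\sqrt{\lambda_{B,i}}\bigr)^2 .
\]
The good $X$ keeps transported off-diagonal mass paired across the transpose (so the $\sqrt{\pi_{ij}\pi_{ji}}$ terms survive — unlike the classical maximal coupling, which kills one of $\pi_{ij},\pi_{ji}$) while steering the unavoidable asymmetry onto the coordinates other than $m$; for $d=2$ one writes this $X$ down explicitly and it attains equality, pinning $\max\tr(P_s\rho_{AB})=F+\tfrac12(\sqrt{\lambda_{A,m}}-\sqrt{\lambda_{B,m}})^2$ there, and for general $d$ this is a routine finite-dimensional optimisation once the shape of $X$ is fixed. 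The genuinely hard step, then, is the non-commuting lower bound: an explicit coupling together with the estimate $\tr(S\rho_{AB})\ge F^2(\rho_A,\rho_B)$ when $\rho_A$ and $\rho_B$ do not commute.
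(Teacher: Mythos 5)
Your upper bound is fine and is exactly the paper's route: the earlier lemma $F(\rho_A,\rho_B)\ge|\tr(P_{as}\rho_{AB})-\tr(P_{s}\rho_{AB})|$ applied to an arbitrary coupling gives $\tr(P_s\rho_{AB})\le\tfrac12(1+F)$. Your commuting-case construction (vectors $\ket{\mu_{ij}}=\sqrt{\pi_{ij}}\ket{g_{\{i,j\}}}$, i.e.\ the coherent lift of a classical coupling) is also essentially the coupling the paper writes down for diagonal marginals. But both substantive lower bounds are left unproved. For the general bound $\tfrac12\bigl(1+F^2(\rho_A,\rho_B)\bigr)$ you only handle commuting $\rho_A,\rho_B$ (via the product coupling), and for the non-commuting case you gesture at a polar-decomposition ansatz built from $\sqrt{\rho_A}\sqrt{\rho_B}$ while yourself flagging that verifying the marginal conditions and the estimate $\tr(S\rho_{AB})\ge F^2$ is ``the main obstacle''. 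That obstacle is the theorem; note in particular that a pure (rank-one) coupling exists only when $\rho_A$ and $\rho_B$ are isospectral, so any workable construction must be genuinely mixed, and it is not at all clear how your ansatz produces one. The paper's proof is quite different: it takes the Uhlmann-optimal decompositions $\rho_A=\sum_i s_i\op{u_i}{u_i}$, $\rho_B=\sum_i t_i\op{v_i}{v_i}$ with $F=\sum_i\sqrt{s_it_i}\<v_i|u_i\>$ (Lemma \ref{Lem-Tech1}), splits the mass into diagonal and rank-2 pieces by a combinatorial pairing lemma (Lemma \ref{Lem-Tech3}), builds for each rank-2 pair an explicit coupling $\tau_{ij}$ with $\tr(P_s\tau_{ij})\ge\tfrac12\tr(\rho_{ij})+\tfrac12 F(\rho_{ij},\sigma_{ij})^2/\tr(\rho_{ij})$ (Lemma \ref{Lem-Tech2}), and sums these using Cauchy--Schwarz. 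None of this is a routine completion of your sketch.

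For the refined diagonal bound you correctly reduce the claim to producing a classical coupling $\pi$ of the eigenvalue vectors with $\sum_{i>j}(\sqrt{\pi_{ij}}-\sqrt{\pi_{ji}})^2\le\sum_i(\sqrt{\lambda_{A,i}}-\sqrt{\lambda_{B,i}})^2-\min_i(\sqrt{\lambda_{A,i}}-\sqrt{\lambda_{B,i}})^2$ --- this is precisely Lemma \ref{Lem-diag} --- but you then assert its existence is ``a routine finite-dimensional optimisation once the shape of $X$ is fixed'' without fixing the shape or carrying out the optimisation. In the paper this is the longest part of the argument: an explicit transport algorithm (Algorithm \ref{algo1}) together with an inductive invariant showing that each update can only decrease $(\sqrt{s_i}-\sqrt{t_i})^2$, that each chosen index $k$ contributes at most $(\sqrt{S_k}-\sqrt{T_k})^2$ to the off-diagonal asymmetry, and that at most $n-1$ indices are ever chosen, which is where the subtracted $\min$ term comes from. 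The $d=2$ case you mention is easy but does not generalise by inspection, so as it stands both lower bounds of the theorem remain unestablished in your proposal.
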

The upper bound part follows from Lemma 7 directly. We notice that lower bound part of the diagonal version is not covered by the lower bound of the general version by studying the following example.

\begin{Exam}
Let $0<x<1$ and
\begin{align*}
\rho_A=\begin{bmatrix}\frac{1+x}{2}&0\\ 0&\frac{1-x}{2}\\ \end{bmatrix}
\rho_B=\begin{bmatrix}\frac{1-x}{2}&0\\ 0&\frac{1+x}{2}\\ \end{bmatrix}
\end{align*}
Then $F(\rho_A,\rho_B)=\sqrt{1-x^2}$ and
\begin{align*}
F(\Lambda_A,\Lambda_B)+\frac{\min_{i}(\sqrt{\Lambda_{Aii}}-\sqrt{\Lambda_{Bii}})^2}{2}=\sqrt{1-x^2}+\frac{1-\sqrt{1-x^2}}{2}=\frac{1+\sqrt{1-x^2}}{2}> \frac{2-x^2}{2}=\frac{1+F^2(\rho_A,\rho_B)}{2}.
\end{align*}
\end{Exam}

In the following, we first prove the lower bound of the diagonal version,
\begin{lemma}
\label{Lem-diag}
Given two distributions $(S_1,\cdots,S_n)$ and $(T_1,\cdots,T_n)$ (that is, $\sum_{i=1}^nS_i=\sum_{i=1}^nT_i=1$), there exists a $d\times d$ matrix $X$ with non-negative elements such that:
\begin{align*}
&\sum_{j=1}^nX_{ij} = S_i,  \quad\forall 1\le i\le n \\
&\sum_{i=1}^nX_{ij} = T_j, \quad\forall 1\le j\le n \\
\sum_{i>j} (\sqrt{X_{ij}}-\sqrt{X_{ji}}&)^2 \le \sum_i(\sqrt{S_i}-\sqrt{T_i})^2 - \min_{i}\{(\sqrt{S_i}-\sqrt{T_i})^2\}.
\end{align*}
\end{lemma}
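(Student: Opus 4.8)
The plan is to forget the quantum setting entirely and argue inside the classical transportation polytope $U(S,T)=\{X\ge0:\sum_jX_{ij}=S_i,\ \sum_iX_{ij}=T_j\}$ (so here $d=n$). The first step is the elementary identity, valid for every $X\in U(S,T)$ because $\sum_{i,j}X_{ij}=\sum_iS_i=1$:
$$\sum_{i>j}(\sqrt{X_{ij}}-\sqrt{X_{ji}})^2=\sum_{i\ne j}X_{ij}-2\sum_{i>j}\sqrt{X_{ij}X_{ji}}=1-\sum_iX_{ii}-2\sum_{i>j}\sqrt{X_{ij}X_{ji}},$$
combined with $\sum_i(\sqrt{S_i}-\sqrt{T_i})^2=2-2\sum_i\sqrt{S_iT_i}$. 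Setting $F:=\sum_i\sqrt{S_iT_i}$, the lemma is then equivalent to producing an $X\in U(S,T)$ with
$$\sum_iX_{ii}+2\sum_{i>j}\sqrt{X_{ij}X_{ji}}\ \ge\ 2F-1+\min_i(\sqrt{S_i}-\sqrt{T_i})^2,$$
i.e.\ a coupling with large diagonal mass and, above all, large ``matched'' off-diagonal mass $\sum_{i>j}\sqrt{X_{ij}X_{ji}}$. Relabelling indices and transposing $X$ (which interchanges $S$ and $T$) leave both sides unchanged, so I relabel so that $\arg\min_i(\sqrt{S_i}-\sqrt{T_i})^2=n$.

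The heuristic driving the construction is: keep the diagonal of $X$ as small as the margins allow and fill the off-diagonal block $Y=(X_{ij})_{i\ne j}$ as symmetrically as possible, writing $Y=Y^{\mathrm{sym}}+Y^{\mathrm{flow}}$ with $Y^{\mathrm{sym}}_{ij}=Y^{\mathrm{sym}}_{ji}=\min(Y_{ij},Y_{ji})$ and $Y^{\mathrm{flow}}$ a flow forced to have divergence $S_i-T_i$ at each $i$; one then pushes the symmetric ``padding'' as high as the constraints permit, since each surviving cost term equals $f_{ij}^2/(\sqrt{Y_{ij}}+\sqrt{Y_{ji}})^2$ and shrinks as that denominator grows. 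I would realize this by induction on $n$ ($n=1$ trivial). In the inductive step one ``closes off'' a non-minimal coordinate $k\ne n$ by resolving its imbalance against the other coordinates through a matched flow, at a cost of at most $(\sqrt{S_k}-\sqrt{T_k})^2$ for that coordinate --- this cost being exactly $(\sqrt{S_k}-\sqrt{T_k})^2$ in the model two-coordinate situation, where the lighter leg of the $2\times2$ sub-transport carries zero diagonal; removing this flow leaves a transportation problem on the remaining $n-1$ coordinates, which after rescaling to honest distributions is handled by the inductive hypothesis, contributing at most the sum of its Hellinger terms minus the smallest. Since coordinate $n$ is never touched, its discrepancy is forced to vanish by conservation of mass at the end and so costs nothing; adding up yields the bound $\sum_{i\ne n}(\sqrt{S_i}-\sqrt{T_i})^2$.

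I expect the whole difficulty to sit in making the inductive step precise, and it is twofold. First, \emph{feasibility}: the matched flow out of $k$ must not push any other row or column below zero, which already demands that $k$ be light enough (it forces, e.g., $S_k+T_k\le1$), and even this is not sufficient; so one cannot close off an arbitrary coordinate and must process them in an order governed by the weights $S_i+T_i$, keeping $n$ for last and, if necessary, resolving a coordinate only partially and returning to it. Second, \emph{accounting}: after deleting the flow through $k$ one must check that the residual Hellinger terms still sum to at most the original ones over $\{1,\dots,n\}\setminus\{k\}$ and that $n$ remains their minimizer --- and this fails term by term, so it needs a global monotonicity estimate for $(a,b)\mapsto(\sqrt a-\sqrt b)^2$ under the coupled decrements, and is exactly where the hypothesis that $n$ is the global minimizer is used. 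A robust way to keep this honest is to monitor the quantity $\sum_iX_{ii}+2\sum_{i>j}\sqrt{X_{ij}X_{ji}}$ directly as the construction proceeds (it is concave in $X$, the cost functional being convex on $U(S,T)$), verifying it never drops below $2F-1+\min_i(\sqrt{S_i}-\sqrt{T_i})^2$.
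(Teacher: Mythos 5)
Your reduction is fine: rewriting the target as finding a coupling with $\sum_i X_{ii}+2\sum_{i>j}\sqrt{X_{ij}X_{ji}}\ \ge\ 2F-1+\min_i(\sqrt{S_i}-\sqrt{T_i})^2$ is a correct and clean reformulation, and your overall plan (eliminate coordinates one at a time, charging at most $(\sqrt{S_k}-\sqrt{T_k})^2$ to each eliminated coordinate and sparing one coordinate) is in fact the same strategy the paper follows with its greedy algorithm. But as written this is a plan, not a proof: the two points you yourself flag as ``the whole difficulty'' --- feasibility of the matched flow and the accounting for the residual problem --- are precisely the content of the lemma, and you leave both unresolved. Concretely, you never specify \emph{how} the imbalance of the closed-off coordinate $k$ is split among its partners, and without a specific rule the per-coordinate charge of $(\sqrt{S_k}-\sqrt{T_k})^2$ is only verified in the two-coordinate model case. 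The paper resolves this by splitting proportionally, so that every off-diagonal pair in row/column $k$ has the \emph{same} ratio $X_{ki}/X_{ik}=s_k/t_k=m$; then $\sum_{i\neq k}(\sqrt{X_{ik}}-\sqrt{X_{ki}})^2=t_k(1-\sqrt m)^2=(\sqrt{s_k}-\sqrt{t_k})^2$ exactly, which is the step your sketch is missing.

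The second gap is the monotonicity of the residual Hellinger terms: after $k$'s imbalance is pushed onto a partner $i$, one needs $(\sqrt{s_i'}-\sqrt{t_i'})^2\le(\sqrt{s_i}-\sqrt{t_i})^2$, and this is \emph{not} automatic for an arbitrary processing order --- in the paper it is proved only because $k$ is chosen at each round to maximize $\max\{s_k/t_k,\,t_k/s_k\}$, the ratio bound $m\ge t_i/s_i$ entering the estimate explicitly. Your attribution of this step to ``the hypothesis that $n$ is the global minimizer'' is off the mark: the minimizer plays no role in the monotonicity, it only enters at the very end, where one notes that at most $n-1$ coordinates are ever charged, so the uncharged coordinate's term (which is at least the minimum) can be subtracted. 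Relatedly, your statement that coordinate $n$ ``is never touched'' and that ``its discrepancy is forced to vanish by conservation of mass'' cannot be taken literally when $S_n\neq T_n$: row and column $n$ must then carry asymmetric off-diagonal mass, so coordinate $n$ is necessarily a partner in earlier rounds; what is true (and what the paper's bookkeeping establishes) is that its residual satisfies $s_n'=t_n'$ at termination and that the asymmetry in its row/column is charged to the coordinates that were closed off against it. Until you supply a concrete splitting rule, a processing order, and the monotonicity estimate, the induction does not go through, so the proposal as it stands does not prove the lemma.
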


\begin{proof}
We will show Algorithm \ref{algo1} produces such an $X$ satisfying the conditions in the lemma. The termination of the algorithm is ensured by the fact that, the size of $A$ is strictly decrease at each iteration of the while loop, and in for loop the size of $B$ is also bounded by $n$.

\begin{algorithm}[H]
\label{alg-find}
\caption{Algorithm for Lemma \ref{Lem-diag}}
\label{algo1}
\KwIn{$s_1,s_2,\cdots,s_n,t_1,t_2,\cdots,t_n$}
\KwOut{$n\times n$ matrix $X$}
Set $A := \{1,2,\cdots,n\}$\;
Set $X$ being a $n\times n$ matrix with zero entries\;
\While{$|A|>1$}{
    Choose $k\in A$ such that $\max\{s_k/t_k,t_k/s_k\} = \max_{i\in A}\{\max\{s_i/t_i,t_i/s_i\}\}$\;
    \uIf{$s_k = t_k$}
    {
        $X_{kk} := s_k$\;
        Delete $k$ from $A$\;
    }
    \uElseIf{$s_k > t_k$}
    {
        $B := \{i\in A, s_i<t_i\}$\;
        $m = s_k/t_k$\;
        \For{$i$ in $B$}
        {
            $x := (t_i-s_i)/(m-1)$\;
            \uIf{$x = t_k$}{
                $X_{ki} := s_k;\quad X_{ik} := t_k$\;
                $X_{ii} := t_i-s_k$\;
                Delete $i$ from $A$\;
                Delete $k$ from $A$\;
                {\bf Break}\;
            }
            \uElseIf{$x > t_k$}{
                $X_{ki} := s_k;\quad X_{ik} := t_k$\;
                $s_i := s_i-t_k;\quad t_i := t_i-s_k$\;
                Delete $k$ from $A$\;
                {\bf Break}\;
            }
            \Else{
                $X_{ki} := m\times x;\quad X_{ik} := x$\;
                $s_k := s_k-m\times x;\quad t_k := t_k-x$\;
                $X_{ii} := s_i-x$\;
                Delete $i$ from $A$\;
            }
        }
    }
    \Else{
        Similar to the ({\bf else if} {$s_k > t_k$}) part, but only replace all $s$ by $t$ and $t$ by $s$.
    }
}
$i := A[1]$; \tcp{Now, $A$ has only one element.}
$X_{ii} := s_i$\;
\Return X
\end{algorithm}

We prove the following statements by induction on the number of iterations.
\begin{align*}
{\bf statement: }&\text{ 1. at the beginning of each round,} \sum_{i\in A}s_i = \sum_{i\in A}t_i. \\
&\text{ 2. Each time update $s_i$ and $t_i$,}\  \sum_{j}X_{ij}+s_i = S_i,\ \ \sum_{j}X_{ji}+t_i = T_i. \\
&\text{ 3. if $i$ is deleted from $A$, then}\ \sum_{j}X_{ij} = S_i,\ \ \sum_{j}X_{ji} = T_i.
\end{align*}
At the beginning of the first iteration, statement 1, 2, 3 trivially hold. Now suppose at the beginning of some round, all statements are valid. Assume at line 4, $k\in A$ is chosen.

If $s_k = t_k$, then at line 7, we delete $k$ from $A$ to obtain $A^\prime$ which is just the set $A$ at the beginning of next iteration, and trivially
$$\sum_{i\in A^\prime}(s_i-t_i) = \sum_{i\in A}(s_i-t_i) - (s_k-t_k) = 0.$$
Moreover, line 6 $X_{kk}:= s_k$ is the first time being changed, so $\sum_iX_{ki} = \sum_{i\neq k}X_{ki}+X_{kk} = S_k-s_k+s_k = S_k$, and similarly, $\sum_iX_{ik}=t_k$.
Therefore, at the beginning of next iteration, all statements still hold.

Without lose of generality, we only analyse the case $s_k>t_k$. At first, each time we update $s_k$ and $t_k$, $s_k/t_k=m$ is always true because this update only happens in line 25, and trivially,
$$\frac{s_k-m\times x}{t_k-x} = \frac{m\times t_k-m\times x}{t_k-x} = m.$$
Suppose $x_i = (t_i-s_i)/(m-1)$. Using statement 1, it is not difficult to realize
$$
\sum_{i\in B}x_i = \frac{\sum_{i\in B}(t_i-s_i)}{s_k-t_k}t_k \ge \frac{s_k-t_k}{s_k-t_k}t_k = t_k,
$$
which implies that it is impossible that in for loop the program execute the line 25 - 27 for all $i\in B$; that is, after execute line 25 - 27 for the first several round, the program will always execute the first two choice of the if statement and then enter the next while loop round.

Each time the program executes line 25 - 27, statement 1 and 2 is not violated because before the updates:
$$s_i+s_k-t_i-t_k = -(m-1)x+s_k-t_k = s_k-mx-(t_k-x).$$
Moreover, the updated $s_k,t_k$ are all positive. The deletion of $i$ does not violate statement 3 because:
$$X_{ii}+X_{ik}=s_i, \quad X_{ii}+X_{ki}=s_i+(m-1)x=t_i, X_{ii} = \frac{ms_i-t_i}{m-1}\ge0,$$
where the last inequality is due to the choice of $k$ (that is, $m\ge t_i/s_i$).

If the program executes line 14 - 18, the validity of statement 1 and 3 is due to the fact:
$$
s_i+s_k-t_i-t_k = -(m-1)x+mt_k-t_k = 0.
$$
If the program executes line 20 - 23, statement 2 and 3 are easy to check and so the statement 1 is also valid.

In summary, all three statements hold during the loops. Therefore, after the program leaves the while loop, $s_i = t_i$ if $i\in A$, and so if we set $X_{ii} = s_i$, then
\begin{align*}
&\sum_{j=1}^nX_{ij} = S_i,  \quad\forall 1\le i\le n \\
&\sum_{i=1}^nX_{ij} = T_j, \quad\forall 1\le j\le n
\end{align*}
are both valid. The elements of $X$ are all non-negative which can be seen from above analysis.

To show that $X$ satisfies the third condition in lemma, we first show that each time we update some $s_i$ and $t_i$ to $s_i^\prime$ and $t_i^\prime$,
\begin{equation}
\label{eqn-s-t-dec}
(\sqrt{s_i^\prime}-\sqrt{t_i^\prime})^2\le(\sqrt{s_i}-\sqrt{t_i})^2.
\end{equation}
The update happens in line 21 and line 26 only. In line 21, note that $t_i-s_i=(m-1)x\ge(m-1)t_k$, so
$$t_i^\prime-s_i^\prime = (t_i-s_k)-(s_i-t_k) = t_i-s_i-mt_k+t_k\ge0,$$
and therefore using the fact $s_i\ge t_k$ and $m\ge\frac{t_i}{s_i}$,
$$
0\le \sqrt{t_i^\prime}-\sqrt{s_i^\prime} =
\sqrt{t_i -mt_k}-\sqrt{s_i-t_k}
\le \sqrt{t_i -\frac{t_i}{s_i}t_k}-\sqrt{s_i-t_k}
= (\sqrt{t_i}-\sqrt{s_i})\sqrt{1-\frac{t_k}{s_i}}
\le (\sqrt{t_i}-\sqrt{s_i}).
$$
In line 26, similarly we have
$$
\sqrt{s_i^\prime}-\sqrt{t_i^\prime} =
\sqrt{s_i -mx}-\sqrt{t_i-x} =
(\sqrt{s_i}-\sqrt{t_i})\sqrt{1-\frac{mx}{s_i}}
$$
where the factor $0\le\sqrt{1-\frac{mx}{s_i}}\le1$.
Finally, we show that at each round of while loop, if $k$ is chosen, then after this round,
$$
\sum_{i\neq k}(\sqrt{X_{ik}}-\sqrt{X_{ki}})^2\le (\sqrt{S_k}-\sqrt{T_k})^2.
$$
We assume $s = s_k$ and $t = t_k$ being the corresponding value at the beginning of this round, so due to Eqn. (\ref{eqn-s-t-dec}), $$(\sqrt{s}-\sqrt{t})^2\le(\sqrt{S_k}-\sqrt{T_k})^2.$$
If we realize the fact that, if $X_{ik}$ and $X_{ki}$ are updated, then
$$\frac{X_{ki}}{X_{ik}} = m = \frac{s}{t}, \quad\sum_{i\neq k} X_{ik} = t$$
therefore,
$$
\sum_{i\neq k}(\sqrt{X_{ik}}-\sqrt{X_{ki}})^2 = \sum_{i\neq k}(\sqrt{X_{ik}}-\sqrt{mX_{ik}})^2 = \sum_{i\neq k}X_{ik}(1-\sqrt{m})^2
= t(1-\sqrt{s/t})^2 = (\sqrt{t}-\sqrt{s})^2.
$$
The off-diagonal elements are only updated during the while loop, and only at most $n-1$ round of the iteration (suppose all the chosen $k$ form a set $K$, then $|K|\le n-1$), so:
\begin{align*}
\sum_{i>k} (\sqrt{X_{ik}}-\sqrt{X_{ki}})^2 &\le
\sum_{k\in K}\sum_{i\neq k}(\sqrt{X_{ik}}-\sqrt{X_{ki}})^2\le
\sum_{k\in K}(\sqrt{S_k}-\sqrt{T_k})^2 \\
&\le
\sum_{k}(\sqrt{S_k}-\sqrt{T_k})^2 - \min_{k}\{(\sqrt{S_k}-\sqrt{T_k})^2\}
\end{align*}
which complete the proof.
\end{proof}

\begin{proof}[Proof of the diagonal version of Theorem \ref{Thm-fidelity}]
Suppose $\Lambda_A = {\rm diag}\{S_1,S_2,\cdots,S_d\}= {\rm diag}\{\lambda_{A,1},\cdots,\lambda_{A,d}\}$ and $\Lambda_B = {\rm diag}\{T_1,T_2,\cdots,T_d\}= {\rm diag}\{\lambda_{B,1},\cdots,\lambda_{B,d}\}$. Let $X$ being the $n\times n$ matrix in Lemma \ref{Lem-diag}. Now, we construct the density operator $\rho$:
$$
\rho = \sum_i X_{ii}|i\>|i\>\<i|\<i| + \sum_{i>j}\Big(\sqrt{X_{ij}}|i\>|j\> +
\sqrt{X_{ji}}|j\>|i\>\Big)\Big(\sqrt{X_{ij}}\<i|\<j| + \sqrt{X_{ji}}\<j|\<i|\Big).
$$
It is easy to verify that $\rho\in\mathfrak{P}(\Lambda_A,\Lambda_B)$. Moreover,
\begin{align*}
\tr(P_s\rho) &= \sum_iX_{ii} + \frac{1}{2}\sum_{i>j}\Big(X_{ij}+X_{ji}+2\sqrt{X_{ij}X_{ji}}\Big) \\
&= 1 - \frac{1}{2}\sum_{i>j}\Big(\sqrt{X_{ij}}-\sqrt{X_{ji}}\Big)^2 \\
&\ge 1-\frac{1}{2}\bigg[\sum_i(\sqrt{S_i}-\sqrt{T_i})^2 - \min_{i}\{(\sqrt{S_i}-\sqrt{T_i})^2\}\bigg] \\
&= F(\Lambda_A,\Lambda_B)+\frac{\min_{i}(\sqrt{\lambda_{A,i}}-\sqrt{\lambda_{B,i}})^2}{2}.
\end{align*}
\end{proof}

To prove the lower bound for general $\rho_A,\rho_B$, we need the following lemmas,
\begin{lemma}
\label{Lem-Tech1}
Given two $d$-dimensional density operators $\rho$ and $\sigma$,
there exists decompositions:
$$
\rho = \sum_{i=1}^ds_i|u_i\>\<u_i|,\quad
\sigma = \sum_{i=1}^dt_i|v_i\>\<v_i|
$$
where $0\le s_i,t_i \le1$ and $|u_i\>$s $|v_i\>$s are unit vectors for all $1\le i\le d$, such that:
$$
\<v_i|v_j\>=\delta_{i,j}, \<v_i|u_i\>\ge 0, \quad F(\rho,\sigma) = \sum_{i=1}^d\sqrt{s_it_i}\<v_i|u_i\>. 
$$
\end{lemma}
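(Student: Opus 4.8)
The plan is to extract both decompositions from Uhlmann's theorem (Fact~\ref{fac:Uhlmann}): fix a purification of $\sigma$ built from its eigenbasis, take the purification of $\rho$ that achieves the Uhlmann maximum against it, expand that purification along a fixed basis of the reference system, and read off the ensemble of $\rho$ from its components --- after a phase adjustment that forces each overlap to be nonnegative.

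First I would write the spectral decomposition $\sigma=\sum_{i=1}^d t_i\op{v_i}{v_i}$ with $\{\ket{v_i}\}$ orthonormal and $0\le t_i\le 1$; this already supplies the decomposition of $\sigma$ demanded by the statement, including $\<v_i|v_j\>=\delta_{i,j}$. Introduce a reference space $\H_R$ of dimension $d$ with a fixed orthonormal basis $\{\ket{i}\}_{i=1}^d$, and take the purification $\ket{\Psi_\sigma}=\sum_i\sqrt{t_i}\,\ket{v_i}\ket{i}_R$ of $\sigma$. By Uhlmann's theorem there is a purification $\ket{\Psi_\rho}\in\H\otimes\H_R$ of $\rho$ with $\<\Psi_\rho|\Psi_\sigma\>=F(\rho,\sigma)\ge 0$ (absorbing a global phase into $\ket{\Psi_\rho}$ if necessary).

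Next I would expand $\ket{\Psi_\rho}=\sum_{i=1}^d\ket{x_i}\ket{i}_R$, where $\ket{x_i}\in\H$ is the (unnormalized, possibly zero) component of $\ket{\Psi_\rho}$ along $\ket{i}_R$. Tracing out $\H_R$ gives $\rho=\sum_i\op{x_i}{x_i}$, while $F(\rho,\sigma)=\<\Psi_\rho|\Psi_\sigma\>=\sum_i\sqrt{t_i}\,\<x_i|v_i\>$. Writing $\<x_i|v_i\>=r_i e^{\mathrm{i}\theta_i}$ with $r_i\ge 0$, I would replace $\ket{x_i}$ by $\ket{x_i'}:=e^{\mathrm{i}\theta_i}\ket{x_i}$, so that $\<x_i'|v_i\>=r_i\ge 0$. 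Since $\sum_i\op{x_i'}{x_i'}=\sum_i\op{x_i}{x_i}=\rho$, the vector $\sum_i\ket{x_i'}\ket{i}_R$ is again a purification of $\rho$, so Uhlmann's maximality forces $\sum_i\sqrt{t_i}\,r_i=\bigl|\sum_i\sqrt{t_i}\,\<x_i'|v_i\>\bigr|\le F(\rho,\sigma)$, whereas the triangle inequality gives $F(\rho,\sigma)=\bigl|\sum_i\sqrt{t_i}\,r_i e^{\mathrm{i}\theta_i}\bigr|\le\sum_i\sqrt{t_i}\,r_i$; hence $F(\rho,\sigma)=\sum_i\sqrt{t_i}\,r_i$. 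Finally I would set $s_i=\<x_i|x_i\>$ and $\ket{u_i}=\ket{x_i'}/\sqrt{s_i}$ when $s_i>0$, with $\ket{u_i}=\ket{v_i}$ otherwise. Then $\rho=\sum_i s_i\op{u_i}{u_i}$, each $\<v_i|u_i\>$ equals $r_i/\sqrt{s_i}\ge 0$ (or $1$, contributing nothing, when $s_i=0$), and $F(\rho,\sigma)=\sum_i\sqrt{s_it_i}\,\<v_i|u_i\>$; the bound $s_i\le 1$ follows from $s_i\op{u_i}{u_i}\le\rho\le I$, and $s_i\ge 0$ is immediate.

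The one genuinely delicate step is the phase-adjustment argument: showing that the rotation $\ket{x_i}\mapsto\ket{x_i'}$ does not merely give $\sum_i\sqrt{t_i}\,r_i\le F(\rho,\sigma)$ but actual equality, which is what allows the per-index constraints $\<v_i|u_i\>\ge 0$ to coexist with the exact identity $F(\rho,\sigma)=\sum_i\sqrt{s_it_i}\,\<v_i|u_i\>$. This is precisely where the maximality in Uhlmann's theorem and the triangle inequality must be played off against each other. The remaining points --- orthonormality of $\{\ket{v_i}\}$, the eigenvalue bounds $0\le t_i\le 1$, and the bookkeeping for vanishing $s_i$ --- are routine.
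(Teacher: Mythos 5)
Your proposal is correct and follows essentially the same route as the paper, which obtains the decompositions from the (constructive) proof of Uhlmann's theorem (Fact~\ref{fac:Uhlmann}) and deduces $\<v_i|u_i\>\ge 0$ from the maximality of the optimal purification overlap. You simply make explicit the details the paper leaves implicit: the spectral purification of $\sigma$, reading off the ensemble of $\rho$ from the reference-basis components, and the phase-rotation-plus-triangle-inequality argument that forces termwise nonnegativity while preserving the exact identity for $F(\rho,\sigma)$.
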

\begin{proof}
The existence can be constructed from the constructive proof of Uhlmann's theorem.
Suppose $|\psi\> = \sum_i\sqrt{s_i}|u_i\>|i\>$ and $|\phi\> = \sum_i\sqrt{t_i}|v_i\>|i\>$ being the purifications of $\rho$ and $\sigma$ such that
 $$
 F(\rho,\sigma)=\ip{\phi}{\psi}=\sum_{i=1}^d\sqrt{s_it_i}\<v_i|u_i\>
 $$
According to $F(\rho,\sigma) \ge \<\phi|\psi\>$, we have $\<v_i|u_i\>\geq 0$.
\end{proof}

\begin{lemma}
\label{Lem-Tech2}
Given rank 2 density operators $\rho = s_1|u_1\>\<u_1|+s_2|u_2\>\<u_2|$ and
$\sigma = t_1|v_1\>\<v_1|+t_2|v_2\>\<v_2|$ such that
\begin{align*}
F(\rho,\sigma)=\sqrt{s_1t_1}\<v_1|u_1\>+\sqrt{s_2t_2}\<v_2|u_2\>,\\
\end{align*}
There exists a coupling $\tau\in\mathfrak{P}(\rho,\sigma)$, such that:
$$
\tr(P_s\tau) \ge \frac{1}{2}+\frac{1}{2}F(\rho,\sigma)^2.
$$
\end{lemma}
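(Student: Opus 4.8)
The plan is to exhibit an explicit coupling and then reduce the claim to the arithmetic--geometric mean inequality. Since $\tr(P_s\tau)=\tfrac12\bigl(1+\tr(S\tau)\bigr)$ for any density operator $\tau$, it suffices to construct $\tau\in\mathfrak P(\rho,\sigma)$ with $\tr(S\tau)\ge F(\rho,\sigma)^2$. I would work with the matched data of Lemma~\ref{Lem-Tech1}, writing $c_i=\<v_i|u_i\>\ge 0$, so that $F(\rho,\sigma)=\sqrt{s_1t_1}\,c_1+\sqrt{s_2t_2}\,c_2$, and take as candidate coupling $\tau=\op{\chi}{\chi}+\op{\xi}{\xi}$, built from the four product vectors $\ket{u_i}\ket{v_j}$ with weights $\sqrt{s_it_j}$, grouped into a ``diagonal'' and an ``off-diagonal'' pure component:
\begin{align*}
\ket{\chi}&=\sqrt{s_1t_1}\,\ket{u_1}\ket{v_1}+\sqrt{s_2t_2}\,\ket{u_2}\ket{v_2},\\
\ket{\xi}&=\sqrt{s_1t_2}\,\ket{u_1}\ket{v_2}+\sqrt{s_2t_1}\,\ket{u_2}\ket{v_1}.
\end{align*}

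First I would check that $\tau$ is indeed a coupling of $\rho$ and $\sigma$. Since $\<u_1|u_2\>=0$ and $\<v_1|v_2\>=0$, all cross terms vanish under a partial trace, so using $s_1+s_2=t_1+t_2=1$ one gets $\tr_B\tau=s_1\op{u_1}{u_1}+s_2\op{u_2}{u_2}=\rho$ and $\tr_A\tau=t_1\op{v_1}{v_1}+t_2\op{v_2}{v_2}=\sigma$; positivity is immediate and $\tr\tau=(s_1t_1+s_2t_2)+(s_1t_2+s_2t_1)=1$.

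Next I would compute $\tr(S\tau)=\<\chi|S|\chi\>+\<\xi|S|\xi\>$. Writing $a=\<u_1|v_2\>$, $b=\<v_1|u_2\>$ and using $S\ket{x}\ket{y}=\ket{y}\ket{x}$ together with $c_i\ge 0$, one obtains
\begin{align*}
\<\chi|S|\chi\>&=s_1t_1c_1^2+s_2t_2c_2^2+2\sqrt{s_1t_1s_2t_2}\,\operatorname{Re}(ab),\\
\<\xi|S|\xi\>&=s_1t_2|a|^2+s_2t_1|b|^2+2\sqrt{s_1t_1s_2t_2}\,c_1c_2.
\end{align*}
Since $F(\rho,\sigma)^2=s_1t_1c_1^2+s_2t_2c_2^2+2\sqrt{s_1t_1s_2t_2}\,c_1c_2$, subtracting gives $\tr(S\tau)-F(\rho,\sigma)^2=2\sqrt{s_1t_1s_2t_2}\,\operatorname{Re}(ab)+s_1t_2|a|^2+s_2t_1|b|^2$, which is $\ge 0$ because $s_1t_2|a|^2+s_2t_1|b|^2\ge 2\sqrt{s_1t_1s_2t_2}\,|a||b|\ge -2\sqrt{s_1t_1s_2t_2}\,\operatorname{Re}(ab)$ by AM--GM. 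Hence $\tr(P_s\tau)=\tfrac12(1+\tr(S\tau))\ge\tfrac12+\tfrac12 F(\rho,\sigma)^2$, as claimed.

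The one genuinely nontrivial step is discovering the coupling: the key point is that the off-diagonal component $\ket{\xi}$ must be included with relative phase $+1$, which is exactly what makes $\<\xi|S|\xi\>$ contribute the \emph{extra} term $+2\sqrt{s_1t_1s_2t_2}\,c_1c_2$ instead of something destructive, and this surplus is precisely what upgrades the trivial bound $\tr(S\tau)\ge 0$ to the sharp $F^2$ bound. Once the ansatz is in place, everything is the short computation above, so I anticipate no further obstacle; as a consistency check, when $\ket{u_i}=\ket{v_i}$ the state $\tau$ reduces to the classical-style coupling used in the proof of the diagonal case of Theorem~\ref{Thm-fidelity}.
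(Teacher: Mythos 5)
Your construction and computation are fine \emph{provided} $\<u_1|u_2\>=0$, but that orthogonality is not part of the hypothesis, and it is exactly what fails in the situation this lemma is designed for: the decomposition fed into it comes from Lemma \ref{Lem-Tech1}, which makes the $|v_i\>$ orthonormal but leaves the $|u_i\>$ merely unit vectors (they are the slices of the optimal Uhlmann purification of $\rho$ and are in general not orthogonal). You cannot simply pass to the eigenbasis of $\rho$ either, because the assumed identity $F(\rho,\sigma)=\sqrt{s_1t_1}\<v_1|u_1\>+\sqrt{s_2t_2}\<v_2|u_2\>$ typically becomes a strict inequality for orthonormal eigenbases, and your lower bound on $\tr(S\tau)$ would then fall short of $F(\rho,\sigma)^2$. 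With $\<u_1|u_2\>\neq 0$ your $\tau=\op{\chi}{\chi}+\op{\xi}{\xi}$ is not a coupling: the cross terms no longer cancel in the second marginal, and one finds $\tr_A\tau=\sigma+2\sqrt{s_1s_2t_1t_2}\,\operatorname{Re}\<u_1|u_2\>\,\bigl(\op{v_1}{v_2}+\op{v_2}{v_1}\bigr)$, so the marginal condition fails whenever $\operatorname{Re}\<u_1|u_2\>\neq 0$ (the first marginal is still $\rho$ since the $|v_i\>$ are orthogonal). So as written your argument proves only the special case of orthogonal $|u_i\>$.

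The missing idea is a relative phase on the ``diagonal'' component. Writing $\<u_1|u_2\>=re^{-i\theta}$, the paper replaces your $\ket{\chi}$ by $\sqrt{s_1t_1}\ket{u_1}\ket{v_1}-e^{2i\theta}\sqrt{s_2t_2}\ket{u_2}\ket{v_2}$ and keeps your $\ket{\xi}$; the phase is chosen precisely so that the offending cross terms of the two components cancel, giving $\tr_A\tau=\sigma$ exactly. The sign and phase do not spoil the overlap estimate: one gets $\tr(S\tau)=\bigl[\sqrt{s_1t_1}\<u_1|v_1\>+\sqrt{s_2t_2}\<u_2|v_2\>\bigr]^2+\bigl|\sqrt{s_1t_2}\<u_1|v_2\>e^{2i\theta}-\sqrt{s_2t_1}\<u_2|v_1\>\bigr|^2\ge F(\rho,\sigma)^2$, so no AM--GM step is even needed. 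In the orthogonal case $r=0$ your coupling and bound are correct and essentially coincide with the paper's, but the general statement requires the phase-corrected ansatz.
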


\begin{proof}
Moreover, we can have
$$
\ip{v_1}{v_2}=0,
\<u_1|u_2\>=re^{-i\theta},
$$
with $r,\theta\in\mathbb{R}$.

Let us construct the density operator $\tau$ as follows:
\begin{align*}
\tau =\ &(\sqrt{s_1t_1}|u_1\>|v_1\> - \sqrt{s_2t_2}e^{2i\theta}|u_2\>|v_2\>)(\sqrt{s_1t_1}\<u_1|\<v_1| - \sqrt{s_2t_2}e^{-2i\theta}\<u_2|\<v_2|) \\
&+ (\sqrt{s_1t_2}|u_1\>|v_2\> + \sqrt{s_2t_1}|u_2\>|v_1\>)(\sqrt{s_1t_2}\<u_1|\<v_2| + \sqrt{s_2t_1}\<u_2|\<v_1|).
\end{align*}
It is straightforward to check $\tau\in\mathfrak{P}(\rho,\sigma)$:
\begin{align*}
\tr_2(\tau) =\ s_1t_1|u_1\>\<u_1| + s_2t_2|u_2\>\<u_2|+ s_1t_2|u_1\>\<u_1| + s_2t_1|u_2\>\<u_2| =\rho
\end{align*}
using $t_1+t_2 = 1$,
\begin{align*}
\tr_1(\tau) =\ &s_1t_1|v_1\>\<v_1| + s_2t_2|v_2\>\<v_2| - \sqrt{s_1t_1s_2t_2} (e^{-2i\theta}|v_1\>\<v_2|\<u_2|u_1\>+e^{2i\theta}|v_2\>\<v_1|\<u_1|u_2\>) \\
&+ s_1t_2|v_1\>\<v_1| + s_2t_1|v_2\>\<v_2| + \sqrt{s_1t_1s_2t_2}|v_1\>\<v_2|\<u_1|u_2\>+|v_2\>\<v_1|\<u_2|u_1\>) \\
=\ &s_1|u_1\>\<u_1| + s_2|u_2\>\<u_2| \\
=\ &\sigma
\end{align*}
using $s_1+s_2 = 1$, $e^{i\theta}\<u_1|u_2\> = e^{-i\theta}\<u_2|u_1\>$.

Moreover, we compute $\tr(S\tau)$ ($S$ is the SWAP operator):
\begin{align*}
\tr(S\tau) =\ &(\sqrt{s_1t_1}\<u_1|\<v_1| - e^{-2i\theta}\sqrt{s_2t_2}\<u_2|\<v_2|)(\sqrt{s_1t_1}|v_1\>|u_1\> - e^{2i\theta}\sqrt{s_2t_2}|v_2\>|u_2\>) \\
&+ (\sqrt{s_1t_2}\<u_1|\<v_2| + \sqrt{s_2t_1}\<u_2|\<v_1|)(\sqrt{s_1t_2}|v_2\>|u_1\> + \sqrt{s_2t_1}|v_1\>|u_2\>) \\
=\ &s_1t_1(\<u_1|v_1\>)^2 + s_2t_2(\<u_2|v_2\>)^2 - \sqrt{s_1t_1s_2t_2}[e^{2i\theta}\<u_1|v_2\>\<v_1|u_2\>+e^{-2i\theta}\<u_2|v_1\>\<v_2|u_1\>] \\
&+ s_1t_2(\<u_1|v_2\>)^2 + s_2t_1(\<u_2|v_1\>)^2 + \sqrt{s_1t_1s_2t_2}[\<u_1|v_1\>\<v_2|u_2\>+\<u_2|v_2\>\<v_1|u_1\>] \\
=\ &[\sqrt{s_1t_1}\<u_1|v_1\> + \sqrt{s_2t_2}\<u_2|v_2\>]^2
+ |\sqrt{s_1t_2}\<u_1|v_2\>e^{2i\theta} - \sqrt{s_2t_1}\<u_2|v_1\>|^2 \\
\ge\ &F(\rho,\sigma)^2.
\end{align*}

Therefore, we obtain the following inequality:
\begin{align*}
\tr(P_s\tau) &= \tr\Big(\frac{1}{2}(I+S)\tau\Big) = \frac{1}{2} + \frac{1}{2}\tr(S\tau) \ge \frac{1}{2} + \frac{1}{2}F(\rho,\sigma)^2.
\end{align*}
\end{proof}

\begin{lemma}
\label{Lem-Tech3}
Given two distributions $(s_1,s_2,\cdots,s_d)$ and $(t_1,t_2,\cdots,t_d)$, there exists a matrix $X$ with non-negative elements such that:
\begin{align*}
&\forall\ 1\le i\le d, \quad \sum_jX_{ij} = 1, \\
&\forall\ 1\le i,j\le d, \quad X_{ij}s_i+X_{ji}s_j = X_{ij}t_i+X_{ji}t_j.
\end{align*}
\end{lemma}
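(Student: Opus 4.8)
The plan is to reformulate the two required conditions in terms of the signed differences $a_i := s_i - t_i$ and then recognize the resulting system as a balanced transportation problem, for which feasibility is automatic. Rewriting the second condition as $X_{ij}a_i + X_{ji}a_j = 0$ for all $i,j$, I would first extract its structural consequences: setting $i=j$ gives $X_{ii}a_i = 0$, so $X_{ii}=0$ whenever $a_i\neq 0$; and for $i\neq j$, non-negativity of $X_{ij}$ and $X_{ji}$ forces $X_{ij}=X_{ji}=0$ unless $a_i$ and $a_j$ are of strictly opposite sign. Hence only entries straddling the sets $P=\{i:a_i>0\}$ and $N=\{i:a_i<0\}$, together with diagonal entries at indices in $Z=\{i:a_i=0\}$, are allowed to be nonzero.

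Because $(s_i)$ and $(t_i)$ are probability distributions, $\sum_i a_i = 0$, so $T:=\sum_{i\in P}a_i = \sum_{j\in N}|a_j|$. If $T=0$ then $s=t$ and $X=I$ works, so assume $T>0$. For $i\in P$, the $i$-th row of $X$ is supported on $\{X_{ij}: j\in N\}$ and must sum to $1$, and symmetrically for $i\in N$; introducing $f_{ij}:=X_{ij}a_i=X_{ji}|a_j|\ge 0$ (for $i\in P$, $j\in N$), these are exactly the constraints that $(f_{ij})_{i\in P, j\in N}$ be a non-negative transportation plan with marginals $(a_i)_{i\in P}$ and $(|a_j|)_{j\in N}$, both of total mass $T$---hence feasible. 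Taking the product plan $f_{ij}=a_i|a_j|/T$ yields the explicit formula
\[
X_{ij}=\begin{cases}|a_j|/T, & a_i a_j<0,\\ 1, & i=j \text{ and } a_i=0,\\ 0, & \text{otherwise.}\end{cases}
\]

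It then remains to verify the two conditions directly. Non-negativity is clear. For the row sums: if $i\in P$ then $\sum_j X_{ij}=\sum_{j\in N}|a_j|/T=T/T=1$; if $i\in N$ then $\sum_j X_{ij}=\sum_{j\in P}|a_j|/T=1$; if $i\in Z$ then $\sum_j X_{ij}=X_{ii}=1$. For the balance condition with $i\neq j$: when $a_i a_j<0$, say $a_i>0>a_j$, we get $X_{ij}a_i+X_{ji}a_j=(|a_j|a_i+|a_i|a_j)/T=0$ since $|a_j|a_i=|a_i||a_j|=-|a_i|a_j$; in every other case both $X_{ij}$ and $X_{ji}$ vanish (or $a_i=a_j=0$), so the identity is trivial; and for $i=j$ it reduces to $X_{ii}a_i=0$, true by construction. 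I do not anticipate any real obstacle here---the only points needing care are pinning down which entries are forced to be zero and seeing that the remaining freedom is precisely a balanced, hence solvable, transportation problem; one could just as well invoke the standard feasibility criterion for transportation polytopes instead of writing the product plan explicitly.
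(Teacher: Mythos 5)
Your construction is correct, and it takes a genuinely different route from the paper. The paper proves the lemma by running a greedy matching procedure (Algorithm 2): indices with $s_i>t_i$ are paired against indices with $s_i<t_i$, the smaller "capacity" $c_k|s_k-t_k|$ is exhausted at each step, and correctness is merely asserted ("can be easily checked"), so the reader is left to verify termination and the row-sum and balance invariants of the loop. You instead observe that, in terms of $a_i=s_i-t_i$, the constraints describe a balanced transportation problem between $P=\{a_i>0\}$ and $N=\{a_i<0\}$ and simply write down the product plan, giving the closed-form matrix $X_{ij}=|a_j|/T$ when $a_ia_j<0$, $X_{ii}=1$ when $a_i=0$, and $0$ otherwise, followed by a complete direct verification of non-negativity, the row sums, and the balance identity. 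This buys a shorter, fully checked, non-algorithmic proof with an explicit formula; what the paper's algorithm buys in exchange is a sparser solution (each row ends up with at most one nonzero off-diagonal entry), but that sparsity is never used in the application, so nothing is lost. One small caveat: your preliminary "structural consequences" overstate what the constraints force --- if $a_i\neq 0$ and $a_j=0$ the equation $X_{ij}a_i+X_{ji}a_j=0$ forces only $X_{ij}=0$, not $X_{ji}=0$, so solutions need not be supported only on $P\times N$ and the $Z$-diagonal --- but this paragraph is purely motivational and your proof does not rely on it, since the final verification of the explicit $X$ is self-contained.
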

\begin{proof}
The following Algorithm \ref{alg-find2} produces such an $X$ satisfies all conditions. The correctness of the algorithm can be easily checked.

\begin{algorithm}[H]
\label{alg-find2}
\caption{Algorithm for Lemma \ref{Lem-Tech3}}
\label{algo2}
\KwIn{$s_1,s_2,\cdots,s_n,t_1,t_2,\cdots,t_n$}
\KwOut{$n\times n$ matrix $X$}
Initial sets $A,B := \emptyset$; \tcp{Two empty sets $A$ and $B$.}
Initial array $c_1,c_2,\cdots,c_n := 1$\;
Initial $n\times n$ matrix $X$ with zero elements\;
\For{$i = 1$ to $n$}
{
    \uIf{$s_i = t_i$}{
        $X_{ii} := s_i$\;
    }
    \uElseIf{$s_i > t_i$}{
        add $i$ to $A$\;
    }
    \Else{
        add $i$ to $B$\;
    }
}
\While{$|A|\ge1$}{
    Choose $k\in A\cup B$ such that $c_k|s_k-t_k| = \min_{i\in A\cup B}\{c_i|s_i-t_i|\}$\;
    \If{$k\in A$}
    {
        choose $i\in B$\;
    }
    \Else{
        choose $i\in A$\;
    }
    $X_{ki} := c_k$\;
    $X_{ik} := X_{ki}|s_k-t_k|/|s_i-t_i|$\;
    $c_i := c_i-X_{ik}$\;
    delete $k$ from $A$ and $B$\;
    \If{$c_i = 0$}{
        delete $i$ from $A$ and $B$\;
    }
}
\Return X
\end{algorithm}
\end{proof}

\begin{proof}[Proof of Theorem \ref{Thm-fidelity} for general $\rho_A$ and $\rho_B$]

According to Lemma \ref{Lem-Tech1} and Lemma \ref{Lem-Tech3}, we decompose $\rho$ and $\sigma$ as:
$$
\rho = \sum_{i=1}^ds_i|u_i\>\<u_i|,\quad
\sigma = \sum_{i=1}^dt_i|v_i\>\<v_i|.
$$
and assume $X$ is the corresponding matrix.
We first construct following matrices:
\begin{align*}
&\forall\ 1\le i\le d, \quad \rho_i = s_iX_{ii}|u_i\>\<u_i|, \quad \sigma_i = t_iX_{ii}|v_i\>\<v_i|, \quad \tr(\rho_i) = \tr(\sigma_i) = s_iX_{ii} = t_iX_{ii}; \\
&\forall\ 1\le i\neq j\le d, \quad \rho_{ij} = s_iX_{ij}|u_i\>\<u_i| + s_jX_{ji}|u_j\>\<u_j|, \quad \sigma_{ij} = t_iX_{ij}|v_i\>\<v_i| + t_jX_{ji}|v_j\>\<v_j|, \\
&\qquad\qquad\qquad\qquad \tr(\rho_{ij}) = \tr(\sigma_{ij}) = s_iX_{ij}+s_jX_{ji}.
\end{align*}
Trivially,
$$
\rho = \sum_i\rho_i+\sum_{i< j}\rho_{ij},\quad\sigma = \sum_i\sigma_i+\sum_{i< j}\sigma_{ij}.
$$
Moreover, using the properties of $X$ and Uhlmann's theorem (see Remark in Lemma \ref{Lem-Tech1}), we observe:
\begin{align*}
&\sum_iF(\rho_i,\sigma_i) + \sum_{i< j}F(\rho_{ij},\sigma_{ij}) \\
\ge\ &\sum_is_iX_{ii}\<u_i|v_i\> + \sum_{i< j} (\sqrt{s_iX_{ij}t_iX_{ij}}\<u_i|v_i\> + \sqrt{s_jX_{ji}t_jX_{ji}}\<u_j|v_j\>) \\
=\ &\sum_i\sqrt{s_it_i}X_{ii}\<u_i|v_i\> + \sum_{i}\sum_{j\neq i }\sqrt{s_it_i}X_{ij}\<u_i|v_i\> \\
=\ &\sum_i\sum_{j}\sqrt{s_it_i}X_{ij}\<u_i|v_i\> \\
=\ &\sum_i\sqrt{s_it_i}\<u_i|v_i\> \\
=\ &F(\rho,\sigma)
\end{align*}
We choose $\tau_i = s_iX_{ii}|u_i\>|v_i\>\<u_i|\<v_i|$ being the coupling of $\rho_i$ and $\sigma_i$ such that
$$
\tr(P_s\tau_{i}) = \frac{1}{2}s_iX_{ii} + \frac{1}{2}s_iX_{ii}(\<u_i|v_i\>)^2 = \frac{1}{2}\tr(\rho_{i})+\frac{1}{2}\frac{1}{\tr(\rho_{i})}F(\rho_{i},\sigma_{i})^2,
$$
and $\tau_{ij}$ being the coupling of $\rho_{ij}$ and $\sigma_{ij}$ which satisfy
$$\tr(P_s\tau_{ij})\ge\frac{1}{2}\tr(\rho_{ij})+\frac{1}{2}\frac{1}{\tr(\rho_{ij})}F(\rho_{ij},\sigma_{ij})^2$$
according to Lemma \ref{Lem-Tech2} as both $\rho_{ij}$ and $\sigma_{ij}$ are rank 2 matrices. Therefore, $$\tau = \sum_i\tau_i+\sum_{i< j}\tau_{ij}$$ is a coupling of $\rho$ and $\sigma$.

Now, we are ready to obtain:
\begin{align*}
\tr(P_s\tau) &= \sum_i\tr(P_s\tau_{i}) + \sum_{i< j} \tr(P_s\tau_{ij}) \\
&\ge \frac{1}{2}\Big[\sum_i\tr(\rho_{i})+\sum_{i< j}\tr(\rho_{ij})\Big] + \frac{1}{2}\bigg[\sum_i\frac{F(\rho_{i},\sigma_{i})^2}{\tr(\rho_{i})}+ \sum_{i< j}\frac{F(\rho_{ij},\sigma_{ij})^2}{\tr(\rho_{ij})}\bigg] \\
&\ge \frac{1}{2} + \frac{1}{2} \frac{\left[\sum_kF(\rho_{k},\sigma_{k}) + \sum_{i< j}F(\rho_{ij},\sigma_{ij})\right]^2}{\sum_k\tr(\rho_{k}) + \sum_{i< j}\tr(\rho_{ij})} \\
&\ge \frac{1}{2} + \frac{1}{2} F(\rho,\sigma)^2
\end{align*}
using Cauchy-Schwarz inequality.
\end{proof}

\section{Tripartite Marginal Problem}
In this section, we employ the techniques in Section 3 to study the tripartite marginal problem. Consider a tripartite Hilbert space $\cH=\cH_A\otimes\cH_B\otimes\cH_C$ with $d_1=d_{\cH_A}=d_{\cH_B}$ and $d_2=d_{\cH_C}$. We study the problem on the existence of a tripartite state $\rho_{ABC}\in\mathcal{D}(\cH)$ with given reduced density matrices $\rho_{AB}=\tr_{C}\rho_{ABC}\in\mathcal{D}(\cH_A\otimes\cH_B)$, $\rho_{AC}=\tr_{B}\rho_{ABC}\in\mathcal{D}(\cH_A\otimes\cH_C)$ and $\rho_{BC}=\tr_{A}\rho_{ABC}\in\mathcal{D}(\cH_B\otimes\cH_C)$.

Another necessary condition of the marginal problem which gives a constrain on the fidelity of its marginals.
\begin{theorem}
For $\rho_{ABC}\in \cD(\cH)$, we have
\begin{align*}
F(\rho_{AC},\rho_{BC})\geq|\tr[(P_{as}\otimes I_C)\rho_{ABC}]-\tr[(P_{s}\otimes I_C)\rho_{ABC}]|= |\tr(P_{as}\rho_{AB})-\tr(P_{s}\rho_{AB})|=D(\vec{p}(\rho_{AB}),\vec{q}(\rho_{AB})).
\end{align*}
where $\vec{p}(\rho_{AB}),\vec{q}(\rho_{AB})$ are defined in \ref{def:probabilities}.
\end{theorem}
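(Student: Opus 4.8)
The plan is to run the Uhlmann-plus-SWAP argument that proved the bipartite version, now dragging the spectator system $C$ (and, for mixed states, a reference system) along; the statement splits into one genuine inequality and two elementary identities, and I would dispatch the identities first. Since $S_{AB}$ and the projectors $P_s,P_{as}$ act only on $\H_A\otimes\H_B$, partial trace gives $\tr[(P_{as}\otimes I_C)\rho_{ABC}] = \tr[P_{as}\,\tr_C\rho_{ABC}] = \tr(P_{as}\rho_{AB})$ and likewise for $P_s$, which yields the first ``$=$''. For the second, Definition~\ref{def:probabilities} makes $\vec p(\rho_{AB}) = (\tr(P_{as}\rho_{AB}),\tr(P_s\rho_{AB}))$ and $\vec q(\rho_{AB})$ the same ordered pair with its two entries interchanged, so $D(\vec p(\rho_{AB}),\vec q(\rho_{AB})) = \tfrac12\big(|\tr(P_{as}\rho_{AB})-\tr(P_s\rho_{AB})| + |\tr(P_s\rho_{AB})-\tr(P_{as}\rho_{AB})|\big) = |\tr(P_{as}\rho_{AB})-\tr(P_s\rho_{AB})|$.

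For the inequality $F(\rho_{AC},\rho_{BC})\ge|\tr(P_{as}\rho_{AB})-\tr(P_s\rho_{AB})|$, fix a purification $\ket{\Psi}\in\H_A\otimes\H_B\otimes\H_C\otimes\H_R$ of $\rho_{ABC}$. Tracing out $B$ and $R$ gives $\rho_{AC}$, so $\ket{\Psi}$ is a purification of $\rho_{AC}$. Using $d_{\H_A}=d_{\H_B}$, let $S_{AB}$ be the SWAP on $\H_A\otimes\H_B$ and put $\ket{\Psi'}=(S_{AB}\otimes I_{CR})\ket{\Psi}$. The key point is that, after relabelling $\H_B$ as $\H_A$, tracing out $B$ and $R$ from $\ket{\Psi'}$ returns exactly $\rho_{BC}$, so $\ket{\Psi'}$ is a purification of $\rho_{BC}$ living in the same space, with the same ``system'' $\H_A\otimes\H_C$ and ``ancilla'' $\H_B\otimes\H_R$ as $\ket{\Psi}$. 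Applying Uhlmann's theorem (Fact~\ref{fac:Uhlmann}) with $\ket{\Psi}$ the fixed purification of $\rho_{AC}$ and $\ket{\Psi'}$ one admissible purification of $\rho_{BC}$,
\[
F(\rho_{AC},\rho_{BC}) = \max_{\ket{\phi}}|\langle\phi|\Psi\rangle| \ge |\langle\Psi'|\Psi\rangle| = |\langle\Psi|(S_{AB}\otimes I_{CR})|\Psi\rangle| = |\tr[(S_{AB}\otimes I_C)\rho_{ABC}]| = |\tr(S_{AB}\rho_{AB})|,
\]
and since $S_{AB}=P_s-P_{as}$ the right-hand side equals $|\tr(P_s\rho_{AB})-\tr(P_{as}\rho_{AB})|$, which is what we want.

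The one step that needs honest verification is the claim that $\tr_{BR}\big[(S_{AB}\otimes I_{CR})\op{\Psi}{\Psi}(S_{AB}\otimes I_{CR})\big]$, read on $\H_A\otimes\H_C$ via $\H_B\cong\H_A$, equals $\rho_{BC}$; this is a routine index computation, and it is precisely why ``swapping $A$ and $B$ in a purification of $\rho_{AC}$ produces a purification of $\rho_{BC}$'' — everything else (Uhlmann, the partial-trace identities, $S=P_s-P_{as}$) is standard. If one prefers to avoid introducing the reference system $\H_R$, one can instead imitate the two-step proof of the bipartite lemma: for pure $\ket{\psi}_{ABC}$, apply Uhlmann with $B$ as purifying system and trial unitary $U_B=I_B$ to get $F(\psi_{AC},\psi_{BC})\ge|\tr(P_{as}\psi_{AB})-\tr(P_s\psi_{AB})|$, and then extend to $\rho_{ABC}=\sum_ip_i\psi_i$ by the strong concavity of fidelity followed by the triangle inequality $\sum_ip_i|a_i-b_i|\ge|\sum_ip_i(a_i-b_i)|$.
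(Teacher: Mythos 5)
Your proof is correct, but it reaches the inequality by a genuinely different and more economical route than the paper. The paper proves the pure-state case by writing $\ket{\psi}_{ABC}=\sum_j(M_j\otimes I)\ket{\Phi}_{AB}\ket{j}_C$, splitting each $M_j$ into symmetric and antisymmetric parts $E_j,F_j$, computing $F(\psi_{AC},\psi_{BC})=\bigl\lVert\sum_j M_jM_j^{*}\bigr\rVert_1$ via the full Uhlmann maximization over $U_A$, and then chaining trace-norm triangle inequalities to reach $|\tr(P_{as}\psi_{AB})-\tr(P_s\psi_{AB})|$; the mixed-state case is then handled separately through convex decomposition, strong concavity of fidelity, and the scalar triangle inequality. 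You instead purify the (possibly mixed) $\rho_{ABC}$ once with a reference $R$, observe that $(S_{AB}\otimes I_{CR})\ket{\Psi}$ is a purification of $\rho_{BC}$ on the same system/ancilla split $(\cH_A\otimes\cH_C\,|\,\cH_B\otimes\cH_R)$, and use only the easy direction of Uhlmann to get $F(\rho_{AC},\rho_{BC})\ge|\langle\Psi|(S_{AB}\otimes I_{CR})|\Psi\rangle|=|\tr(S_{AB}\rho_{AB})|$, with $S_{AB}=P_s-P_{as}$ finishing the job; your relabelling claim $\tr_{BR}\bigl[(S_{AB}\otimes I_{CR})\op{\Psi}{\Psi}(S_{AB}\otimes I_{CR})\bigr]=\rho_{BC}$ is indeed a routine index check and is the only thing that needs verifying, and the two stated equalities are handled correctly. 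What you give up is the paper's stronger intermediate bound $F\ge\bigl\lVert\sum_j(E_jE_j^{\dag}-F_jF_j^{\dag})\bigr\rVert_1$ (which is only weakened to the stated quantity at the last step); what you gain is a one-shot argument that needs no concavity step, no explicit $E/F$ decomposition, and covers mixed $\rho_{ABC}$ directly, since fidelity monotonicity under partial trace (equivalently, the easy half of Uhlmann) does all the work. Your closing sketch of the paper-style two-step variant (pure case with the suboptimal choice of purification, then strong concavity plus the triangle inequality) is also sound and essentially reproduces the paper's structure with a shorter Step 1.
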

\begin{proof}
The proof is also divided into two steps. In the first step, we show this statement is valid for pure states. In the second step, we prove it holds for any state by the concavity arguments.

STEP 1: Let $$\ket{\psi}_{ABC}=\sum_{j=0}^{d_2-1}(M_j\otimes I)\ket{\Phi}_{AB}\ket{j}_C=\sum_{j=0}^{d_2-1}(I\otimes M_j^T)\ket{\Phi}_{AB}\ket{j}_C.$$

We choose
\begin{align*}
E_j=\frac{M_j+M_j^T}{2},\ \ F_j=\frac{M_j-M_j^T}{2} \Longrightarrow M_j=E_j+F_j,\ \ E_j^T=E_j,\ \ \ F_j^T=-F_j.
\end{align*}
We can verify the following,
\begin{align*}
&\tr[(P_{s}\otimes I_C)\ \psi_{ABC}]=\tr(P_{s}\psi_{AB})=\sum_{j=0}^{d_2-1}||E_j^{\dag}E_j||_1,\\
&\tr[(P_{as}\otimes I_C)\ \psi_{ABC})]=\tr(P_{as}\psi_{AB})=\sum_{j=0}^{d_2-1}||F_j^{\dag}F_j||_1.
\end{align*}

Since we are considering the fidelity between $\psi_{AC}$ and $\psi_{BC}$, they are regard living in the same space now. We use $\ket{\psi}_{ABC}$ to be the purification of $\psi_{AC}$, and $S_{AB}\ket{\psi}_{ABC}$ to be the purification of $\psi_{BC}$.

According to Uhlmann's theorem (Fact \ref{fac:Uhlmann}), we know that
\begin{align*}
F(\psi_{AC},\psi_{BC})=\max_{U_A}|\langle{\psi}|(U_A\otimes I_{BC})S_{AB}\ket{\psi}|=\max_{U_A}|\sum_{j=0}^{d_2-1}\tr(M_j^{\dag}U_AM_j^T)|=||\sum_{j=0}^{d_2-1}M_jM_j^{*}||_1,
\end{align*}
where $S_{AB}$ is the SWAP operator of $\cH_A\otimes\cH_B$ and $U_A$ is ranging over all unitary of $\cH_A$.

Notice that $$||\sum_{j=0}^{d_2-1}M_jM_j^{*}||_1=||(\sum_{j=0}^{d_2-1}M_jM_j^{*})^{\dag}||_1=||\sum_{j=0}^{d_2-1}M_j^{T}M_j^{\dag}||_1,$$ we have
\begin{align*}
F(\psi_{AC},\psi_{BC})=&||\sum_{j=0}^{d_2-1}M_jM_j^{*}||_1\\
                =&||\frac{1}{2}\sum_{j=0}^{d_2-1}M_jM_j^{*}||_1+||\frac{1}{2}\sum_{j=0}^{d_2-1}M_j^{T}M_j^{\dag}||_1\\
                \geq& ||\frac{1}{2}\sum_{j=0}^{d_2-1}M_jM_j^{*}+\frac{1}{2}\sum_{j=0}^{d_2-1}M_j^{T}M_j^{\dag}||_1\\
                =&||\frac{1}{2}\sum_{j=0}^{d_2-1}[(E_j+F_j)(E_j+F_j)^{*}+(E_j+F_j)^{T}(E_j+F_j)^{\dag}]||_1\\
                =&||\frac{1}{2}\sum_{j=0}^{d_2-1}[(E_j+F_j)(E_j^T-F_j^T)^{*}+(E_j-F_j)(E_j+F_j)^{\dag}]||_1\\
                =&||\frac{1}{2}\sum_{j=0}^{d_2-1}[(E+F)(E^{\dag}-F^{\dag})+(E_j-F_j)(E_j^{\dag}+F_j^{\dag})]||_1\\
                =& ||\sum_{j=0}^{d_2-1}(E_jE_j^{\dag}-F_jF_j^{\dag})||_1\\
                \geq &|||\sum_{j=0}^{d_2-1}E_jE_j^{\dag}||_1-||\sum_{j=0}^{d_2-1}F_jF_j^{\dag}||_1|\\
                =& |\tr[(P_{as}\otimes I_C]\rho_{ABC}]-\tr[(P_{s}\otimes I_C)\rho_{ABC}]|\\
                =&|\tr(P_{as}\psi_{AB})-\tr(P_{s}\psi_{AB})|,
\end{align*}
where the first inequality is according to the triangle inequality, so is the second inequality.

STEP 2: Now we are going to show this statement is true for general quantum states.
Assume $\rho_{ABC}=\sum p_i\psi_i$ with $\psi_i$s being pure states, we can have
\begin{align*}
\rho_{AC}=\sum p_i\psi_{iAC},\ \ \ \ \rho_{BC}=\sum p_i\psi_{iBC}.
\end{align*}
Therefore, we have
\begin{align*}
F(\rho_{AC},\rho_{BC})&=F(\sum p_i\psi_{iAC},\sum p_i\psi_{iBC})\\
&\geq \sum p_i F(\psi_{iAC},\psi_{iBC})\\
&\geq \sum p_i|\tr(P_{as}\psi_{i})-\tr(P_{s}\psi_{i})|\\
&\geq |\tr(P_{as}\sum p_i\psi_{i})-\tr(P_{sy}\sum p_i\psi_{i})|\\
&= |\tr[(P_{as}\otimes I_C]\rho_{ABC}]-\tr[(P_{s}\otimes I_C)\rho_{ABC}]|\\
&= |\tr(P_{as}\rho_{AB})-\tr(P_{s}\rho_{AB})|,
\end{align*}
where the first inequality is due to the strong concavity of fidelity, the second is due to the pure state case, and the third inequality is because of the triangle inequality.
\end{proof}

We observe that this bound is tight by studying Example 4.

We first show that the following necessary condition of the above marginal problem which gives a constrain on the distance of its marginal.
\begin{theorem}
For $\rho_{ABC}\in \cD(\cH)$, we have
\begin{align*}
D(\rho_{AC},\rho_{BC})\leq 2\sqrt{\tr[(P_{as}\otimes I_C)\ \rho_{ABC}]\tr[(P_{s}\otimes I_C)\ \rho_{ABC}]}=2\sqrt{\tr(P_{as}\rho_{AB})\tr(P_{s}\rho_{AB})}=F(\vec{p}(\rho_{AB}),\vec{q}(\rho_{AB})),
\end{align*}
where $P_{as}$ and $P_{s}$ are respectively defined as the projections onto the anti-symmetric subspace and the symmetric subspace of $\cH_A\otimes\cH_B$, and $\vec{p}(\rho_{AB}),\vec{q}(\rho_{AB})$ are defined in \ref{def:probabilities}.
\end{theorem}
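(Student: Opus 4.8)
The plan is to obtain this bound as an immediate corollary of the fidelity inequality just established (the preceding theorem, which asserts $F(\rho_{AC},\rho_{BC})\geq|\tr(P_{as}\rho_{AB})-\tr(P_{s}\rho_{AB})|=D(\vec{p}(\rho_{AB}),\vec{q}(\rho_{AB}))$), combined with the general relation $F^2+D^2\leq 1$ from Fact~\ref{fact:fdrelation}. In other words, this is the tripartite analogue of Lemma~8, and its proof follows Lemma~8's proof line for line, with $\rho_A,\rho_B$ replaced by $\rho_{AC},\rho_{BC}$ and the bipartite symmetry lemma replaced by the tripartite theorem above.

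Concretely, I would proceed in three short steps. First, apply Fact~\ref{fact:fdrelation} to the pair $\rho_{AC},\rho_{BC}$ to get $D(\rho_{AC},\rho_{BC})\leq\sqrt{1-F^2(\rho_{AC},\rho_{BC})}$. Second, insert the lower bound on $F(\rho_{AC},\rho_{BC})$ from the previous theorem; since $t\mapsto\sqrt{1-t^2}$ is nonincreasing on $[0,1]$ and the relevant quantity $D(\vec{p}(\rho_{AB}),\vec{q}(\rho_{AB}))$ lies in $[0,1]$, the inequality is preserved, yielding
\begin{align*}
D(\rho_{AC},\rho_{BC}) \leq \sqrt{1-F^2(\rho_{AC},\rho_{BC})} \leq \sqrt{1-D^2(\vec{p}(\rho_{AB}),\vec{q}(\rho_{AB}))} = 2\sqrt{\tr(P_{as}\rho_{AB})\tr(P_{s}\rho_{AB})}.
\end{align*}
Third, I would dispatch the two claimed equalities by elementary computation: writing $a=\tr(P_{as}\rho_{AB})$ and $b=\tr(P_{s}\rho_{AB})$, one has $a+b=1$ because $P_s+P_{as}=I_{\cH_A\otimes\cH_B}$ and $\tr\rho_{AB}=1$, so $D(\vec{p},\vec{q})=|a-b|$, and $1-|a-b|^2=(a+b)^2-(a-b)^2=4ab$ gives $\sqrt{1-D^2(\vec{p},\vec{q})}=2\sqrt{ab}=\sqrt{ab}+\sqrt{ba}=F(\vec{p},\vec{q})$; finally $\tr[(P_{as}\otimes I_C)\rho_{ABC}]=\tr(P_{as}\Tr_C\rho_{ABC})=\tr(P_{as}\rho_{AB})$ and likewise for $P_s$, which accounts for the first displayed equality.

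There is essentially no substantive obstacle here: the real content has already been proved in the tripartite fidelity theorem above, and what remains is the monotonicity bookkeeping and the one-line identity $1-(a-b)^2=4ab$ valid because $a+b=1$. The only point warranting a sentence of care is to note that $a,b\in[0,1]$ (so that $\sqrt{1-t^2}$ is applied on its domain of monotonicity), which is immediate from $0\le P_{s},P_{as}\le I$.
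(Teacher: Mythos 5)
Your proposal is correct and coincides with the paper's own argument: apply Fact~\ref{fact:fdrelation} to $\rho_{AC},\rho_{BC}$, plug in the fidelity lower bound from the preceding tripartite theorem, and check the equalities via $\tr(P_{as}\rho_{AB})+\tr(P_{s}\rho_{AB})=1$. The extra bookkeeping you supply (monotonicity of $t\mapsto\sqrt{1-t^2}$, the identity $1-(a-b)^2=4ab$, and the partial-trace identity) only makes explicit what the paper leaves implicit.
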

\begin{proof}
By the relation between fidelity and distance \ref{fact:fdrelation} and the above lemma, we have
\begin{align*}
D(\rho_{AC},\rho_{BC})\leq \sqrt{1-F^2(\rho_{AC},\rho_{BC})}\leq \sqrt{1-D^2(\vec{p}(\rho_{AB}),\vec{q}(\rho_{AB}))}=2\sqrt{\tr(P_{as}\rho_{AB})\tr(P_{s}\rho_{AB})}=F(\vec{p}(\rho_{AB}),\vec{q}(\rho_{AB})).
\end{align*}
\end{proof}

One can directly employ Example 5 to observe this bound is tight.

Based on these two results, we are able to give a class of criteria for the tripartite marginal problem.
\begin{theorem}
For $\rho_{AB}\in\mathcal{D}(\cH_A\otimes\cH_B)$, $\rho_{AC}\in\mathcal{D}(\cH_A\otimes\cH_C)$ and $\rho_{BC}\in\mathcal{D}(\cH_B\otimes\cH_C)$, there exists a tripartite state $\rho_{ABC}\in\mathcal{D}(\cH_A\otimes\cH_B\otimes \cH_C)$ with these marginals only if
\begin{align*}
F[\rho_{AC},((\mathcal{E}_B\otimes I_C)\rho_{BC}]\geq |\tr[P_{as}(I_A\otimes \mathcal{E}_B)(\rho_{AB})]-\tr[P_{s}(I_A\otimes \mathcal{E}_B)(\rho_{AB})|
\end{align*}
hold for any quantum channel $\mathcal{E}_B$.
\end{theorem}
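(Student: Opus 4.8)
The plan is to deduce this from the tripartite fidelity bound proved just above --- namely $F(\rho_{AC},\rho_{BC})\geq|\tr(P_{as}\rho_{AB})-\tr(P_{s}\rho_{AB})|$ for every tripartite $\rho_{ABC}$ with $d_{\cH_A}=d_{\cH_B}$ --- by first pushing a hypothetical global state through the channel $\mathcal{E}_B$ on the middle system and then invoking that bound for the resulting state.

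First I would fix a quantum channel $\mathcal{E}_B$ on system $B$ whose output system $B'$ has dimension $d_1=d_{\cH_A}$, identifying $\cH_{B'}$ with $\cH_A$ so that $P_s,P_{as}$ on $\cH_A\otimes\cH_{B'}$ and the fidelity $F(\rho_{AC},\cdot)$ both make sense. Suppose $\rho_{AB},\rho_{AC},\rho_{BC}$ are compatible, witnessed by $\rho_{ABC}\in\mathcal{D}(\cH_A\otimes\cH_B\otimes\cH_C)$. Set $\sigma_{AB'C}:=(I_A\otimes\mathcal{E}_B\otimes I_C)(\rho_{ABC})$; since $\mathcal{E}_B$ is CPTP, $\sigma_{AB'C}$ is again a density operator.

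Next I would identify the two-body marginals of $\sigma_{AB'C}$. Using that partial trace over one subsystem commutes with a channel acting on a disjoint subsystem, and that $\mathcal{E}_B$ is trace preserving, one gets $\sigma_{AC}=\rho_{AC}$, $\sigma_{B'C}=(\mathcal{E}_B\otimes I_C)(\rho_{BC})$ and $\sigma_{AB'}=(I_A\otimes\mathcal{E}_B)(\rho_{AB})$. Since $\sigma_{AB'C}$ lives on a space with $d_{\cH_A}=d_{\cH_{B'}}=d_1$, the tripartite fidelity bound above applies to it and yields
\begin{align*}
F(\sigma_{AC},\sigma_{B'C})\geq\bigl|\tr(P_{as}\sigma_{AB'})-\tr(P_{s}\sigma_{AB'})\bigr|.
\end{align*}
Substituting the expressions for the marginals gives exactly
\begin{align*}
F\bigl[\rho_{AC},(\mathcal{E}_B\otimes I_C)(\rho_{BC})\bigr]\geq\bigl|\tr[P_{as}(I_A\otimes\mathcal{E}_B)(\rho_{AB})]-\tr[P_{s}(I_A\otimes\mathcal{E}_B)(\rho_{AB})]\bigr|,
\end{align*}
and since $\mathcal{E}_B$ was arbitrary this is the claimed family of criteria.

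I expect no real obstacle here: the essential content is that the set of compatible marginal triples is closed under applying a local channel to one party, combined with the already-established inequality. The only points requiring a little care are (i) matching the output dimension of $\mathcal{E}_B$ to $d_1$ so the earlier theorem can be invoked --- this is precisely why the criterion is stated for channels on $B$ with that output dimension --- and (ii) the routine check that the reduced states of $\sigma_{AB'C}$ are as claimed. A natural strengthening worth pointing out is that one may simultaneously apply an arbitrary channel on $C$ as well, since $C$ is only a spectator in the symmetric/antisymmetric analysis, producing an a priori larger class of necessary conditions.
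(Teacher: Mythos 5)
Your proposal is correct and is essentially the paper's own argument: push the hypothetical global state through $I_A\otimes\mathcal{E}_B\otimes I_C$, note that its two-body marginals become $\rho_{AC}$, $(\mathcal{E}_B\otimes I_C)(\rho_{BC})$ and $(I_A\otimes\mathcal{E}_B)(\rho_{AB})$, and apply the tripartite fidelity bound to the transformed state. Your added care about matching the channel's output dimension to $d_{\cH_A}$ (and the remark that one could also act on $C$) is a reasonable tightening of details the paper leaves implicit, but it does not change the route.
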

\begin{proof}
If there exists such a $\rho_{ABC}$ for $\rho_{AB}$, $\rho_{BC}$ and $\rho_{AC}$, then for any unitary quantum channel $\mathcal{E}_B$, the state $(I_{AC}\otimes\mathcal{E}_B)(\rho_{ABC})$ must have marginals $(I_A\otimes\mathcal{E}_B)\rho_{AB}$, $(I_C\otimes\mathcal{E}_B)\rho_{BC}$ and $\rho_{AC}$. Then we can directly apply Theorem 16.
\end{proof}

To see our criteria are ``universal", we notice that for any $\cH_{A}$, $\cH_B$ and $\cH_C$ with dimension $d_A\leq d_B\leq d_C$, the system can be embedded into $d_B\otimes d_B\otimes d_C$, or $d_A\otimes d_C\otimes d_C$, or $d_C\otimes d_B\otimes d_C$. Now two local dimensions are equal and we can apply our results.

\section{Conclusion and Discussion}
In this paper, we propose a quantum earth mover's distance and prove a no-go quantum
Kantorovich-Rubinstein theorem. Then we provide a new class of criteria for the tripartite marginal problem.

It would be interesting to find more applications of this quantum earth mover's distance. One of the possible application is to derive other versions of quantum
Kantorovich-Rubinstein theorem. For marginal problem, one problem is to generalize our criteria into the multipartite version.
Recall that in our proof, we use projections onto the symmetric subspace and anti-symmetric subspace which are nothing but the two copy irreducible representations induced by Shur-Weyl duality. We believe that the multi-copy irreducible representations given in the general Shur-Weyl duality \cite{Fulton2004} play a key role in the studying of the multipartite quantum marginal problem.

Theory of classical optimal transpose has being increasingly used to unlock various problems in imaging sciences (such as color or texture processing), computer vision and graphics (for shape manipulation) or machine learning (for regression, classification and density fitting).
We believe that the theory of quantum optimal transpose would be useful in studying machine learning, in particular, quantum machine learning.

We thank Doctor Christian Schilling and Professor Andreas Winter for suggestions on references. We are grateful for Professor Andreas Winter's discussion about Kantorovich-Rubinstein theorem.

This work is supported by DE180100156.

\newcommand{\etalchar}[1]{$^{#1}$}

\vspace*{-1ex}

\clearpage

\appendix

\end{document}